\journal{josaa} 
\newtheorem{lemma}{Lemma}
\newtheorem{theorem}{Theorem}
\newtheorem{definition}{Definition}
\newtheorem{corollary}{Corollary}
\newtheorem{remark}{Remark}
\newtheorem{proposition}{Proposition}
\title{On the optimal measurement of conversion gain in the presence of dark noise}
\author[1,*]{Aaron Hendrickson} 
\author[2]{David P. Haefner} 
\author[2]{Bradley L. Preece} 
\affil[1]{U.S. Navy, NAWCAD DAiTA Group, Atlantic Ranges \& Targets, Optical Systems, Electro-Optical Tracking Systems, 23013 Cedar Point Road, Bldg. 2118, Patuxent River MD, 20670}
\affil[2]{U.S. Army Combat Capabilities Development Command (DEVCOM), C5ISR Center, Research \& Technology Integration (RTI), 10221 Burbeck Road, Fort Belvoir, VA 22060}
\affil[*]{Corresponding author: ajh4184@gmail.com}
\begin{abstract}
Working from a model of Gaussian pixel noise, we present and unify over twenty-five years of developments in the statistical analysis of the photon transfer conversion gain measurement. We then study a two-sample estimator of the conversion gain that accounts for the general case of non-negligible dark noise. The moments of this estimator are ill-defined (their integral representations diverge) and so we propose a method for assigning pseudomoments, which are shown to agree with actual sample moments under mild conditions. A definition of optimal sample size pairs for this two-sample estimator is proposed and used to find approximate optimal sample size pairs that allow experimenters to achieve a predetermined measurement uncertainty with as little data as possible. The conditions under which these approximations hold are also discussed. Design and control of experiment procedures are developed and used to optimally estimate a per-pixel conversion gain map of a real image sensor.  Experimental results show excellent agreement with theoretical predictions and are backed up with Monte Carlo simulation. The per-pixel conversion gain estimates are then applied in a demonstration of per-pixel read noise estimation of the same image sensor. The results of this work open the door to a comprehensive pixel-level adaptation of the photon transfer method.
\end{abstract}
\begin{document}

\maketitle


\section{Introduction}

Photon Transfer ({\sc pt}) is a methodology developed in the 1970s to aid in the design, characterization, and optimization of solid state image sensors \cite{janesick_2007}. Since its inception, {\sc pt} has evolved to become the standard approach to image sensor characterization for manufacturers and consumers alike, culminating in its use as the basis for the European Machine Vision Association ({\sc emva}) 1288 standard in 2005 \cite{EMVA_1288,EMVA_1288_4_linear}. To fully characterize the performance of an image sensor with {\sc pt}, many performance parameters are measured including, but not limited to, conversion gain, read noise, dynamic range, and quantum efficiency.

Of all performance parameters prescribed by the {\sc pt} method, the conversion gain, $g\,(e\text{-}/\mathrm{DN})$, is the most critical as it is the unit conversion constant needed to convert sensor measurements from device specific units of Digital Numbers ({\sc dn}) into physical units of electrons $(e\text{-})$. Since units of $\mathrm{DN}$ are device specific, it is only after multiplying by $g$ that $\mathrm{DN}$ measurements represent a physical quantity that can be compared between different devices. For this reason, many of the performance parameters measurable by the {\sc pt} method, e.g. read noise, dynamic range, and quantum efficiency, at some point require multiplying quantities in units of $\mathrm{DN}$ by $g$.

Naturally, $g$ must be estimated through measurement; thus, the precision and accuracy of its measurement fundamentally limits the precision and accuracy of all measured {\sc pt} parameters converted to units of electrons by $g$. To see why this is, suppose $G$ is an estimator of $g$ and $T$ is an estimator of some parameter $\tau$ with units of $\mathrm{DN}$. Then $\mathscr T=T\times G$ is an estimator of $\tau$ in units of electrons and we have for the absolute coefficient of variation $\mathsf{ACV}\mathscr T=\sqrt{\mathsf{Var}\mathscr T}/|\mathsf E\mathscr T|$:
\begin{equation}
\mathsf{ACV}^2\mathscr T=\mathsf{ACV}^2T+(\mathsf{ACV}^2T)(\mathsf{ACV}^2G)+\mathsf{ACV}^2G.
\end{equation}
It follows that $\mathsf{ACV}\mathscr T\geq\mathsf{ACV}G$ showing that the relative uncertainty in the estimator $G$ represents a lower bound for the relative uncertainty of all {\sc pt} measurements converted to units of electrons.

Given the central role conversion gain plays in the {\sc pt} method, much work has been conducted into investigating various estimators for $g$, their statistical properties, and procedures for performing $g$-estimation \cite{Beecken:96,EMVA_1288,EMVA_1288_4_linear,janesick_2001,pain_2003,janesick_2006,janesick_2007,bohndiek:2008,starkey_2016,Hendrickson:17,hendrickson_2019,hendrickson_2021,preece_22,Nakamoto_2022}. From this body of research, only a small subset have studied how sample size(s) relate to estimator uncertainty for the purpose of properly designing and controlling conversion gain measurement experiments \cite{Beecken:96,janesick_2007,Hendrickson:17,hendrickson_2021}. Perhaps of most influence to this correspondence is the work of Beecken \& Fossum (1996) who studied a shot-noise-limited estimator under a model of Gaussian noise \cite{Beecken:96}. When the shot-noise-limited assumption is made under the Gaussian model, statistical analysis of the resulting estimator becomes much more tractable and parameters such as the optimal (minimal) sample size needed to control the estimator's uncertainty can be derived. Of course, results derived from this assumption only hold in the limiting case of shot-noise-limited estimation and is generally not applicable for measuring $g$ in the presence of dark noise which is the more general situation. Such a situation can occur when dealing with sensors of lower-quality or when attempting to measure conversion gain under low-illumination conditions.

An important consideration in how conversion gain is measured depends on if a particular sensor under test exhibits uniform characteristics, whereby, each pixel in the sensor array is assumed to exhibit identical values of conversion gain, read noise, etc. When a sensor conforms to the uniform assumption, only a single, global, value of each {\sc pt} parameter needs to be reported to describe the performance of the device.  Since most sensors are comprised of many pixels, only a few frames of data are needed to obtain large sample estimates of these parameters. For example, Janesick showed that when working in the shot-noise-limited regime, one can measure the conversion gain to a relative uncertainty of $1\%$ by sampling $\mathcal O(10^4)$ (approximately $20\,000$) pixels from just a few frames of image data \cite{janesick_2007}. When the assumption of sensor uniformity is not held, it is necessary to measure each {\sc pt} performance parameter on a per-pixel basis, requiring many frames of data. Returning back to the example given by Janesick, the $\mathcal O(10^4)$ pixels needed to obtain a global estimate of $g$ within $1\%$ uncertainty now turns into $\mathcal O(10^4)$ full-resolution image frames needed to measure the conversion gain of each pixel to the same uncertainty.

The requirement of large datasets for per-pixel conversion gain estimation is further compounded in the case of Complementary Metal-Oxide Semiconductor ({\sc cmos}) Active Pixel Sensors ({\sc aps}), which not only exhibit per-pixel nonuniformities, but also are generally nonlinear devices. To circumvent the problem of nonlinearity, Janesick \emph{et al}.~proposed an extension of {\sc pt} which requires measuring $g$ at the low-illumination end of it's dynamic range, where the sensor exhibits linear characteristics (see Section $7.3$ of \cite{janesick_2006} and \cite{bohndiek:2008}). As sensor dark noise dominates signal noise at low-illumination, the number of samples needed to measure $g$ again dramatically increases. For example, in this correspondence we show that if photon induced signal noise is approximately equal to sensor dark noise, then to measure $g$, per-pixel, to $1\%$ uncertainty one needs no less than $\mathcal O(10^5)$ (approximately $180\, 000$) total image frames of data. The dramatic increase in the amount of data needed for per-pixel estimation, especially when measuring $g$ outside the shot-noise-limited regime, drives the need to find a means for estimating $g$ with as little data as possible; this is the main goal of this correspondence.

From an experimental perspective, capturing large datasets is not just time consuming but also introduces sensitivity to drift in the sensor and/or light source, which unchecked, will corrupt per-pixel estimates of $g$. Through utilizing optimal sampling, one is able to measure $g$ with as few frames as possible, reducing the time to capture data and mitigating drift. As such, our goal here is to develop a general method for optimally estimating $g$ that holds in the shot-noise-limit as well as the more general case where dark noise is non-negligible.

We will organize this paper by first introducing a Gaussian model of sensor noise along with the basic assumptions under which the model holds (Section \ref{sec:sensor_noise_model}). We follow with a condensed review of theory and statistical analysis for estimators of the conversion gain in the shot-noise-limited case (Section \ref{sec:shot_noise_limited}) and general case which accounts for the presence of dark noise (Section \ref{sec:sub_shot_noise_limited}). Section \ref{sec:sub_shot_noise_limited_optimal_sample sizes} then studies optimal sample size pairs for the general conversion gain estimator, which is an estimator based on two independent samples. Because analytical expressions for the exact optimal sample size pairs cannot be derived, this section derives approximations and determines the conditions under which these approximations are useful. From these analytical expressions for the approximate optimal sample size pairs, Section \ref{sec:DOE_COE} describes a method for design and control of experiment for per-pixel conversion gain estimation on a real image sensor. Lastly, Section \ref{sec:R_map} presents an application to per-pixel conversion gain estimation by performing per-pixel estimation of read noise on the same sensor.


\section{Sensor noise model}
\label{sec:sensor_noise_model}

Consider a sensor observing a constant irradiance, monochromatic light source. The expected number of interacting photons, $\mu_\gamma$, received by each pixel per fixed integration time is given by
\begin{equation}
\mu_\gamma=\frac{AEt_\mathrm{exp}}{h\nu}QE_\mathrm{int},
\end{equation}
where $A\, (\mathrm m^2)$ is the pixel area, $E\, (\mathrm W/\mathrm m^2)$ is the irradiance, $t_\mathrm{exp}\, (\mathrm s)$ is the integration time, $h\nu\, (\mathrm J)$ is the quantization constant for photons of frequency $\nu\, (\mathrm{Hz})$, and $QE_\mathrm{int}\, (-)$ is the interacting quantum efficiency representing the probability an incident photon is detected \cite{EMVA_1288}.

The actual number of observed interacting photons for any given integration time will vary randomly and is accurately modeled by the Poisson distribution \cite{janesick_2007}. Assuming the transfer of interacting photons to photoelectrons is one-to-one, variations in the number of photoelectrons generated in each pixel per integration time can therefore be modeled as a Poisson random variable $\mathscr P\sim\mathcal P(\mu_{e\text{-}})$. As the irradiance--which we shall refer to as the illumination level--increases, $\mu_{e\text{-}}\to\infty$ leading to the asymptotic result
\begin{equation}
\frac{\mathscr P-\mu_{e\text{-}}}{\sqrt{\mu_{e\text{-}}}}\overset{d}{\to}\mathcal N(0,1);
\end{equation}
however, for even relatively small values of $\mu_{e\text{-}}$, say $\mu_{e\text{-}}>30$, the error in this approximation is generally acceptable for applied purposes and so we can assume $\mathscr P\sim\mathcal N(\mu_{e\text{-}},\mu_{e\text{-}})$.

Additionally, dark noise present in the sensor is comprised of dark current shot noise and read noise with the read noise further comprised of several other noise sources, e.g.~source follower noise, reset noise, etc., so by the central limit theorem we have the reasonable model $\mathscr D\sim\mathcal N(\mu_\mathscr D,\sigma_\mathscr D^2)$, where $\mu_\mathscr D$ and $\sigma_\mathscr D$ represent the sensor bias and dark noise in units of electrons, respectively. We will further assume $\mathscr P$ and $\mathscr D$ are independent. In considering the combined photon induced and dark signal $\mathscr P+\mathscr D$, one might be tempted to assume so long as $\mu_{e\text{-}}$ is large, that the approximate normality of $\mathscr P$ implies $\mathscr P+\mathscr D$ must also be approximately normal; however, this is not always the case. Once again using $\mathscr P\sim\mathcal P(\mu_{e\text{-}})$ the density of $\mathscr P+\mathscr D$ can be formally expressed by the convolution integral $f_{\mathscr P+\mathscr D}(x)=\int_{\mathbb N_0} f_\mathscr D(x-n)\,\mathrm dF_\mathscr P(n)$ leading to the explicit form (see \cite{starkey_2016} and Section 7.2 of \cite{janesick_2007}):
\begin{equation}
\label{eq:fPD_density}
f_{\mathscr P+\mathscr D}(x)=
\sum_{n=0}^\infty\phi(x-n;\mu_\mathscr D,\sigma_\mathscr D)\frac{e^{-\mu_{e\text{-}}}\mu_{e\text{-}}^n}{n!},
\end{equation}
where $\phi(\cdot;\mu,\sigma)$ is the normal probability density with mean $\mu$ and standard deviation $\sigma$. Due to the emerging importance of (\ref{eq:fPD_density}) in the literature \cite{starkey_2016,Nakamoto_2022} we also note that an alternative expression can be obtained by writing $f_{\mathscr P+\mathscr D}$ in terms of the exponential square series function \cite{Schmidt_2017}
\begin{equation}
    f_{\mathscr P+\mathscr D}(x)=\phi(x;\mu_\mathscr D,\sigma_\mathscr D)e^{-\mu_{e\text{-}}}E_{sq}(e^{-1/2\sigma_\mathscr D^2},\mu_{e\text{-}} e^{(x-\mu_\mathscr D)/\sigma^2},1),
\end{equation}
where $E_{sq}(q,r,z)\coloneqq\sum_{n=0}^\infty q^{n^2}r^n z^n/n!$. Proposition $5.2$ in Schmidt (2017) along with the identities $e^{iz}=\cos z+i\sin z$ and $2\cos z=e^{iz}+e^{-iz}$ give
\begin{multline}
    \label{eq:Esq_integral_rep}
    E_{sq}(e^{-1/2\sigma_\mathscr D^2},\Omega,1)=\\
    2\int_0^\infty\phi(t;0,1)\exp(\Omega\cos (t/\sigma_\mathscr D))\cos(\Omega\sin (t/\sigma_\mathscr D))\,\mathrm dt,
\end{multline}
which subsequently provides a novel integral representation for (\ref{eq:fPD_density}).

Figure \ref{fig:poisson_gaussian_mixture} plots $f_{\mathscr P+\mathscr D}$ for $\mu_{e\text{-}}=30\,e\text{-}$, $\mu_\mathscr D=0\,e\text{-}$ and $\sigma_\mathscr D=0.3,1.0\,e\text{-}$. Upon inspection, we see that despite $\mu_{e\text{-}}$ being large, so that $\mathscr P$ is approximately normal, only the density for $\mathscr P+\mathscr D$ corresponding to $\sigma_\mathscr D=1.0\,e\text{-}$ can be accurately modeled as normal. As such, on top of the restriction $\mu_{e\text{-}}>30\,e\text{-}$ we will also assume $\sigma_\mathscr D>1.0\,e\text{-}$ so that we may use the model $\mathscr P+\mathscr D\sim\mathcal N(\mu_{e\text{-}}+\mu_\mathscr D,\mu_{e\text{-}}+\sigma_\mathscr D^2)$. This additional assumption excludes photon counting devices such as Deep-Sub-Electron-Read Noise ({\sc dsern}) image sensors. We do note that in the case where a large number of sample are collected these assumptions can be loosened as the distributions of the sample statistics, e.g.~mean and variance, will agree with our model even if $\mathscr D$ and $\mathscr P+\mathscr D$ deviate from normality.
\begin{figure}[htb]
\centering
\includegraphics[scale=1]{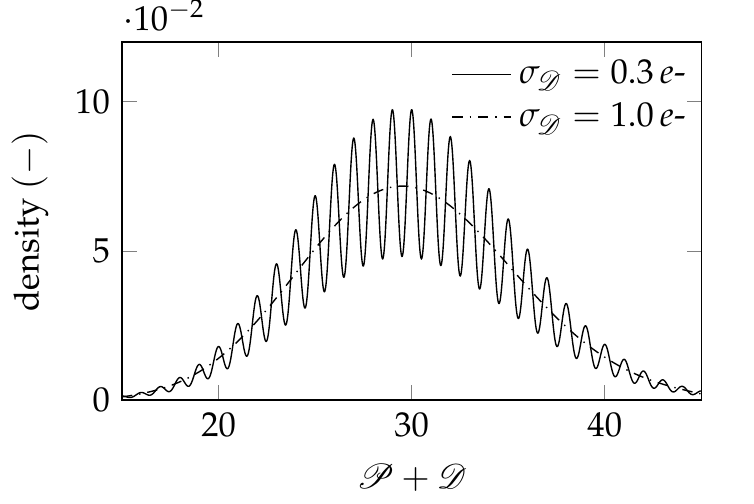}
\caption{Probability density of $\mathscr P+\mathscr D$ for $\mu_{e\text{-}}=30\,e\text{-}$, $\mu_\mathscr D=0.0\,e\text{-}$, and $\sigma_\mathscr D=0.3\,e\text{-}$ (solid) and $\sigma_\mathscr D=1.0\,e\text{-}$ (dash-dot).}
\label{fig:poisson_gaussian_mixture}
\end{figure}

In addition to being accurate for data from real image sensors, the normal model
\begin{gather}
\mathscr D\sim\mathcal N(\mu_\mathscr D,\sigma_\mathscr D^2)\\ 
\mathscr P+\mathscr D\sim\mathcal N(\mu_{e\text{-}}+\mu_\mathscr D,\mu_{e\text{-}}+\sigma_\mathscr D^2)
\end{gather}
is mathematically convenient due to the tractability of normal moments. However, the continuous signals $\mathscr D$ and $\mathscr P+\mathscr D$ are never directly observed because they are quantized via an analog-to-digital converter. This quantization step transforms the corresponding continuous density functions into discrete probability mass functions, which distorts the shape of the distributions and thus alters the moments. By further imposing $g\leq\sigma_\mathscr D$, the effects of quantization are negligible so that we can reasonably assume a normal model for the quantized signal as well \cite{janesick_2001,janesick_2007}.


\section{Review of gain estimation theory: shot-noise-limited case}
\label{sec:shot_noise_limited}

Our first goal is to present the major statistical results of conversion gain estimation for the special case of a shot-noise-limited response. The results in this section will serve as a starting point for the general case of gain estimation in Section \ref{sec:sub_shot_noise_limited}. While we focus on $g$-estimation for a single pixel by repeated sampling of the pixel in time, the analysis is equally valid in the case of spatially sampling an array of identical pixels exposed to a uniform light source. To help keep things organized, Table \ref{tab:SNL_symbols} lists key symbols pertaining to shot-noise-limited estimation and their associated formulae. Note that all these symbols are built up from only three fundamental quantities: $\mu_{e\text{-}}$, $g$, and $n$.

\begin{table}[htb]
\centering
\caption{\bf List of symbols and corresponding formulae associated with shot-noise-limited estimation.}
\begin{tabular}{cc|cc}
\hline
symbol &formula &symbol &formula \\
\hline
$\mu_P$ &$\mu_{e\text{-}}/g$ &$\alpha$ &$(n-1)/2$ \\
$\sigma_P^2$ &$\mu_{e\text{-}}/g^2$ &$\beta$ &$\alpha/\sigma_P^2$ \\
\hline
$\mu_{\bar P}$ &$\mu_P$ &$g$ &$\mu_P/\sigma_P^2$ \\
$\sigma_{\bar P}^2$ &$\sigma_P^2/n$ &$-$ &$-$ \\
\hline
\end{tabular}
 \label{tab:SNL_symbols}
\end{table}


\subsection{Estimator derivation}

A pixel can be modeled as a transfer function $\mathcal T:e\text{-}\to\mathrm{DN}$ mapping photoelectrons to a digital number output. In general, each pixel comprising the active sensor array is assumed to have it's own unique transfer function. Suppose $\mathcal T(e\text{-})=e\text{-}/g$ and our pixel has zero bias and dark noise so that the dark signal can be formally represented by the degenerate variable $\mathscr D\sim\delta(0)$. Because the only noise in the pixel output will come from photon shot noise we say the pixel exhibits a shot-noise-limited response. If $P=\mathcal T(\mathscr P)$ is the random variable representing the photon induced output signal in $\mathrm{DN}$, then it's easy to see
\begin{equation}
\label{eq:photon_induced_mean}
\mu_P\coloneqq\mathsf EP=\mathsf E(\mathscr P/g)=\mu_{e\text{-}}/g
\end{equation}
and
\begin{equation}
\label{eq:photon_induced_variance}
\sigma_P^2\coloneqq\mathsf{Var}P=\mathsf{Var}(\mathscr P/g)=\mu_{e\text{-}}/g^2.
\end{equation}
Combining these two results give us the fundamental photon transfer relation
\begin{equation}
g=\mu_P/\sigma_P^2.
\end{equation}

This fundamental relation implies a natural estimator for $g$. Let $\{P_1,\dots,P_n\}$ be a sample of $n$ i.i.d.~observations of our pixel exposed to some constant level of incident illumination for a fixed, nonzero integration time. Then we can estimate $g$ with
\begin{equation}
\label{eq:shot_noise_limited_estimator}
G=\bar P/\hat P,
\end{equation}
where $\bar P=\frac{1}{n}\sum_{k=1}^nP_k$ and $\hat P=\frac{1}{n-1}\sum_{k=1}^n(P_k-\bar P)^2$ are the sample mean and sample variance, respectively. Under the normal model $P_k\sim\mathcal N(\mu_P,\sigma_P^2)$, $(\bar P,\hat P)$ is a complete sufficient statistic of the unknown parameter $(\mu_P,\sigma_P^2)$ so that $G$ also happens to be the Uniformly Minimum-Variance Unbiased Estimator ({\sc umvue}) of its expected value \cite{casella_2002}.


\subsection{Historical developments}

Statistical analysis of the estimator (\ref{eq:shot_noise_limited_estimator}) has been previously conducted by Beecken \& Fossum (1996) as well as Janesick (2001) \cite{Beecken:96,janesick_2001}. In both works the moments of $G$ were approximated with the moments of it's first-order Taylor polynomial about $(\mathsf E\bar P,\mathsf E\hat P)=(\mu_P,\sigma_P^2)$
\begin{equation}
\label{eq:G_SNL_Taylor_series}
    G\approx g+\frac{g}{\mu_P}(\bar P-\mu_P)-\frac{g}{\sigma_P^2}(\hat P-\sigma_P^2).
\end{equation}
Using these approximate moments Beecken \& Fossum were able to show under the normal model of sensor noise (c.f.~Eq.~20 in \cite{Beecken:96} using $S_g/g\mapsto\mathsf{ACV}G$, $g\mapsto1/g$, $\bar x\mapsto\mu_P$, $N\mapsto n$, and $\sigma/S\mapsto 1$)
\begin{equation}
\label{eq:VarG_shot_limit_approx}
\mathsf{ACV}^2G\approx\frac{2}{n-1}+\frac{1}{n}\frac{1}{\mu_{e\text{-}}}.
\end{equation}
For clarity we note that the paper by Beecken \& Fossum actually studied the estimator $G^{-1}=\hat P/\bar P$, which is the conversion gain in units of $\mathrm{DN}/e\text{-}$; however, applying their noise model and statistical analysis to $G$ as given by (\ref{eq:shot_noise_limited_estimator}) gives the result in (\ref{eq:VarG_shot_limit_approx}). Furthermore, as $n$ becomes large we may replace $2/(n-1)$ with $2/n$, which is the same estimate given by Janesick (c.f.~Eq.~2.18 in \cite{janesick_2001} using $\sigma_K^2\mapsto\mathsf{Var}G$, $N_{pix}\mapsto n$, $S(\mathrm{DN})\mapsto\mu_P$, and $K\mapsto g$).

In both works it was noted that at typical illumination levels where $g$ is measured, (\ref{eq:VarG_shot_limit_approx}) is very well approximated by its first term, which happens to be the first-order Taylor approximation of $\mathsf{ACV}^2\hat P^{-1}$. In other words, for sufficiently large illumination, we have the approximate relation $\mathsf{ACV}G\approx\mathsf{ACV}\hat P^{-1}$. Such an approximation is useful because it tells us that the number of samples needed to measure $g$ to a given uncertainty can be approximated by the number of samples needed to measure $1/\sigma_P^2$ to the same uncertainty. This is a key insight that will appear several more times throughout this correspondence. Using the high illumination approximation $\mathsf{ACV}G\approx \sqrt{2/n}$ \cite[c.f.~Eq.~$6.12$,]{janesick_2007} we subsequently obtain Janesick's approximation for the optimal (minimal) number of samples needed to estimate $g$ to a desired relative uncertainty $\mathsf{acv}_0$:
\begin{equation}
\label{eq:SNL_OptSample_Janesick}
    n^\mathrm{opt}\approx \frac{2}{\mathsf{acv}_0^2}.
\end{equation}


\subsection{Further developments}

Statistical analysis of the shot-noise-limited estimator in (\ref{eq:shot_noise_limited_estimator}) yields tractable results for the density function, moments, and optimal sample size without the need to invoke approximate methods. For a normal model, we have for the distributions of the photon induced signal and it's sample statistics : $P_k\sim\mathcal N(\mu_P,\sigma_P^2)$, $\bar P\sim\mathcal N(\mu_{\bar P},\sigma_{\bar P}^2)$, and $\hat P\sim\mathcal G(\alpha,\beta)$, with the latter being a gamma variable parameterized in terms of shape $\alpha$ and rate $\beta$ (see Table \ref{tab:SNL_symbols} for parameter formulae). Since the $P_k$ are normal, $\bar P$ and $\hat P$ are also independent. For the density function we use change of variables to write $f_G(g)=\int_0^\infty t\phi(gt;\mu_{\bar P},\sigma_{\bar P})f_{\hat P}(t)\,\mathrm dt$, which after substituting $u=|g|t/\sigma$ gives
\begin{equation}
    f_G(g)=\frac{\alpha}{|g|}\left(\frac{\beta\sigma_{\bar P}}{|g|}\right)^\alpha\frac{e^{z^2(g)/4-\mu_{\bar P}^2/(2\sigma_{\bar P}^2)}}{\sqrt{2\pi}}D_{-\alpha-1}(z(g)).
\end{equation}
Here, $z(g)=\beta\sigma/|g|-\mu_{\bar P}\operatorname{sign}(g)/\sigma_{\bar P}$ and $D_\nu(z)\coloneqq\frac{e^{-z^2/4}}{\Gamma(-\nu)}\int_0^\infty t^{-\nu-1}e^{-t^2/2-zt}\,\mathrm dt$, which is the parabolic cylinder function.

As for the moments of $G$ we have by the independence of $\bar P$ and $\hat P$
\begin{equation}
    \mathsf EG^k=(\mathsf E\bar P^k)\mathsf E\hat P^{-k}.
\end{equation}
The moments of $\bar P$ are easily found by comparing the generating function $e^{2zt-t^2}=\sum_{n=0}^\infty H_n(z)t^n/n!$ to the moment generating function of $\bar P$ yielding
\begin{equation}
\mathsf E\bar P^k=(i\sigma_{\bar P}/\sqrt 2)^k H_k\left(-i\frac{\mu_{\bar P}}{\sqrt 2\sigma_{\bar P}}\right),
\end{equation}
where $H_k(z)\coloneqq (2z-\partial_z)^k\cdot 1$ denotes the $k$th degree Hermite polynomial and $i$ the imaginary unit. Likewise, we have for the moments of $\hat P^{-1}$
\begin{equation}
    \mathsf E\hat P^{-k}=\beta^k(\alpha)_{-k}
\end{equation}
with $(s)_n\coloneqq\Gamma(s+n)/\Gamma(s)$ denoting the Pochhammer symbol.

From here, the approximate results of Beecken, Fossum, and Janesick can be derived rigorously from an exact expression for $\mathsf{ACV}G$. The following lemma will aid us in this goal and also be used extensively throughout the rest of this work. All proofs can be found in Section \ref{sec:proofs}.
\begin{lemma}
\label{lem:ACV2_of_product}
Let $T=XY$. If $X$ and $Y$ are independent then
\begin{equation}
\mathsf{ACV}^2T=\mathsf{ACV}^2X+(\mathsf{ACV}^2X)(\mathsf{ACV}^2Y)+\mathsf{ACV}^2Y,
\end{equation}
with $\mathsf{ACV}T\coloneqq\sqrt{\mathsf{Var}T}/|\mathsf ET|$.
\end{lemma}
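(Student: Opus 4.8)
The plan is to work directly from the definition $\mathsf{ACV}^2 Z = \mathsf{Var} Z/(\mathsf E Z)^2$ and reduce everything to the single normalized quantity $\mathsf E Z^2/(\mathsf E Z)^2$. First I would record the elementary algebraic identity
\begin{equation}
\frac{\mathsf E Z^2}{(\mathsf E Z)^2}=\frac{\mathsf{Var}Z+(\mathsf E Z)^2}{(\mathsf E Z)^2}=1+\mathsf{ACV}^2 Z,
\end{equation}
which holds for any random variable $Z$ with $\mathsf E Z\neq 0$. This turns the additive-looking claim into a multiplicative one and is really the crux of the whole manipulation.

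Next I would bring in the independence hypothesis, which enters in exactly two places. Since $X$ and $Y$ are independent we may factor both the first and the second moments of the product: $\mathsf E T=\mathsf E(XY)=(\mathsf E X)(\mathsf E Y)$ and, because $X^2$ and $Y^2$ are then also independent, $\mathsf E T^2=\mathsf E(X^2Y^2)=(\mathsf E X^2)(\mathsf E Y^2)$. Dividing the second by the square of the first gives
\begin{equation}
\frac{\mathsf E T^2}{(\mathsf E T)^2}=\frac{\mathsf E X^2}{(\mathsf E X)^2}\cdot\frac{\mathsf E Y^2}{(\mathsf E Y)^2}.
\end{equation}
Applying the identity from the first step to each of the three ratios converts this immediately into $1+\mathsf{ACV}^2 T=(1+\mathsf{ACV}^2 X)(1+\mathsf{ACV}^2 Y)$.

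Finally I would expand the right-hand product and cancel the leading $1$ from both sides, leaving $\mathsf{ACV}^2 T=\mathsf{ACV}^2 X+(\mathsf{ACV}^2 X)(\mathsf{ACV}^2 Y)+\mathsf{ACV}^2 Y$, which is precisely the stated conclusion. There is no genuinely hard step here; the argument is a short computation. The only points requiring care are bookkeeping ones: I must assume $\mathsf E X\neq 0$ and $\mathsf E Y\neq 0$ so that each $\mathsf{ACV}$ (and hence $\mathsf{ACV}T$) is well defined, and I must be explicit that independence is used twice, once to split $\mathsf E T$ and once to split $\mathsf E T^2$ — the result genuinely fails without it, since in general $\mathsf E T^2\neq(\mathsf E X^2)(\mathsf E Y^2)$. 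I would note that this lemma is the exact structural analogue of the relative-uncertainty propagation formula displayed for $\mathscr T=T\times G$ in the introduction, so the same proof underlies that earlier claim.
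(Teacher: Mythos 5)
Your proof is correct, but it takes a different route from the paper. The paper starts from the law of total variance applied to the product, writing $\mathsf{Var}\,T=(\mathsf EX^2)\mathsf{Var}Y+(\mathsf EY)^2\mathsf{Var}X$, then substitutes $\mathsf EX^2=\mathsf{Var}X+(\mathsf EX)^2$ and divides through by $(\mathsf ET)^2=(\mathsf EX)^2(\mathsf EY)^2$; the additive-plus-cross-term structure emerges from expanding that variance decomposition. You instead bypass any variance-of-a-product formula and work with raw second moments: independence factors both $\mathsf ET$ and $\mathsf ET^2$, and the identity $\mathsf EZ^2/(\mathsf EZ)^2=1+\mathsf{ACV}^2Z$ converts the factorization into the clean multiplicative statement
\begin{equation}
1+\mathsf{ACV}^2T=\bigl(1+\mathsf{ACV}^2X\bigr)\bigl(1+\mathsf{ACV}^2Y\bigr),
\end{equation}
from which the lemma follows by expansion. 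Your version has two small advantages: it makes explicit that $1+\mathsf{ACV}^2$ is multiplicative over independent products, so the lemma extends immediately to products of $k$ independent factors, and it isolates exactly where independence is used (once for each moment). The paper's version has the advantage of invoking a single standard probability fact and is the more conventional phrasing in this literature. Your bookkeeping remarks ($\mathsf EX\neq 0$, $\mathsf EY\neq 0$, and hence $\mathsf ET\neq 0$ by independence) are appropriate; the paper leaves these implicit. One cosmetic note: the squared coefficient of variation is insensitive to the absolute value in the definition $\mathsf{ACV}T=\sqrt{\mathsf{Var}T}/|\mathsf ET|$, so working with $(\mathsf ET)^2$ as you do is harmless.
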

With the help of Lemma \ref{lem:ACV2_of_product} and the moment expressions given above we deduce the exact expression (c.f.~(\ref{eq:VarG_shot_limit_approx}))
\begin{equation}
\mathsf{ACV}^2G=\frac{2}{n-5}+\frac{2}{n(n-5)}\frac{1}{\mu_{e\text{-}}}+\frac{1}{n}\frac{1}{\mu_{e\text{-}}}.
\end{equation}
Setting $\mathsf{ACV}^2G=\mathsf{acv}_0^2$ yields a quadratic equation in $n$, which upon solving gives the optimal sample size needed to measure $g$ to a desired relative uncertainty $\mathsf{acv}_0$:
\begin{equation}
\label{eq:shot_noise_lim_G_est_opt_sample}
n^\mathrm{opt}=\frac{2+5\mathsf{acv}_0^2+\frac{1}{\mu_{e\text{-}}}+\left((2+5\mathsf{acv}_0^2+\frac{1}{\mu_{e\text{-}}})^2-12\frac{\mathsf{acv}_0^2}{\mu_{e\text{-}}}\right)^{1/2}}{2\mathsf{acv}_0^2}.
\end{equation}
Unfortunately, this exact expression is not of great use in practice because it depends on the unknown quantity $\mu_{e\text{-}}$, which also cannot be directly measured without \emph{a priori} knowledge of $g$. To obtain an approximation that is independent of $\mu_{e\text{-}}$, we first consider the following result showing that $\mathsf{ACV}G$ is dominated by $\mathsf{ACV}\hat P^{-1}$ at high-illumination.
\begin{theorem}
\label{thm:ACVG_shot_limited_approx}
Let $G$ be as given in (\ref{eq:shot_noise_limited_estimator}). As illumination increases, $\mu_{e\text{-}}\to\infty$ and
\begin{equation}
\mathsf{ACV}G=\mathsf{ACV}\hat P^{-1}\left(1+\frac{n-3}{4n}\frac{1}{\mu_{e\text{-}}}+\mathcal O(\mu_{e\text{-}}^{-2})\right),
\end{equation}
with $\mathsf{ACV}\hat P^{-1}=\sqrt{2/(n-5)}$.
\end{theorem}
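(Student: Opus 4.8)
The plan is to begin from the exact expression for $\mathsf{ACV}^2G$ established immediately above the theorem,
\[
\mathsf{ACV}^2G=\frac{2}{n-5}+\frac{2}{n(n-5)}\frac{1}{\mu_{e\text{-}}}+\frac{1}{n}\frac{1}{\mu_{e\text{-}}},
\]
and to reorganize it as a single leading constant multiplying a factor of the form $1+\mathcal O(\mu_{e\text{-}}^{-1})$. First I would combine the two terms carrying $1/\mu_{e\text{-}}$ over the common denominator $n(n-5)$, which collapses them via $\tfrac{2}{n(n-5)}+\tfrac1n=\tfrac{n-3}{n(n-5)}$, so that $\mathsf{ACV}^2G=\tfrac{2}{n-5}\bigl(1+\tfrac{n-3}{2n}\tfrac{1}{\mu_{e\text{-}}}\bigr)$.

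Next I would identify the leading constant $2/(n-5)$ with $\mathsf{ACV}^2\hat P^{-1}$. Since $\hat P\sim\mathcal G(\alpha,\beta)$ with $\alpha=(n-1)/2$, the inverse-moment formula $\mathsf E\hat P^{-k}=\beta^k(\alpha)_{-k}$ gives $\mathsf E\hat P^{-1}=\beta/(\alpha-1)$ and $\mathsf E\hat P^{-2}=\beta^2/[(\alpha-1)(\alpha-2)]$. Forming $\mathsf{ACV}^2\hat P^{-1}=\mathsf{Var}\hat P^{-1}/(\mathsf E\hat P^{-1})^2$, the rate $\beta$ cancels and leaves $1/(\alpha-2)=2/(n-5)$. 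This both confirms the stated value $\mathsf{ACV}\hat P^{-1}=\sqrt{2/(n-5)}$ and lets me rewrite the factorization as $\mathsf{ACV}^2G=\mathsf{ACV}^2\hat P^{-1}\bigl(1+\tfrac{n-3}{2n}\mu_{e\text{-}}^{-1}\bigr)$.

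Finally I would take the positive square root and expand. Writing $u=\tfrac{n-3}{2n}\mu_{e\text{-}}^{-1}$, the binomial series $(1+u)^{1/2}=1+u/2+\mathcal O(u^2)$ applies for all sufficiently large $\mu_{e\text{-}}$ (where $|u|<1$); since $u=\mathcal O(\mu_{e\text{-}}^{-1})$ we get $u/2=\tfrac{n-3}{4n}\mu_{e\text{-}}^{-1}$ and $\mathcal O(u^2)=\mathcal O(\mu_{e\text{-}}^{-2})$, which reproduces exactly the claimed expansion. I do not expect any real obstacle here beyond bookkeeping: the result is a clean algebraic regrouping followed by a first-order binomial expansion, and the only sub-step needing care is the cancellation of $\beta$ when computing $\mathsf{ACV}^2\hat P^{-1}$, which is precisely what guarantees that the leading factor is the pure function of $n$ quoted in the statement.
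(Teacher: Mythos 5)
Your proposal is correct and takes essentially the same route as the paper: both reduce to the identical factorization $\mathsf{ACV}^2G=\mathsf{ACV}^2\hat P^{-1}\bigl(1+\tfrac{n-3}{2n}\mu_{e\text{-}}^{-1}\bigr)$ followed by a first-order binomial expansion of the square root, the only cosmetic difference being that the paper obtains this factor by dividing the product formula of Lemma \ref{lem:ACV2_of_product} by $\mathsf{ACV}^2\hat P^{-1}$ with $\mathsf{ACV}^2\bar P=1/(n\mu_{e\text{-}})$, whereas you regroup the already-stated exact expression for $\mathsf{ACV}^2G$. Your explicit verification of $\mathsf{ACV}^2\hat P^{-1}=1/(\alpha-2)=2/(n-5)$ from the gamma inverse moments is a sound (and slightly more self-contained) check of a fact the paper simply quotes.
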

Theorem \ref{thm:ACVG_shot_limited_approx} confirms the observations of \cite{Beecken:96,janesick_2001} in that
\begin{equation}
\mathsf{ACV}G\sim\mathsf{ACV}\hat P^{-1}
\end{equation}
at high-illumination. Because of this finding, the optimal sample size for $G$ can be approximated by the optimal sample size for $\hat P^{-1}$ when restricted to high-illumination conditions. Setting $\mathsf{ACV}^2\hat P^{-1}=\mathsf{acv}_0^2$ yields a linear equation in $n$, which upon solving for $n$ subsequently gives us the high-illumination, asymptotic approximation for the optimal sample size of $G$
\begin{equation}
\label{eq:shot_noise_limited_optimal_samples}
n^\mathrm{opt}\sim\frac{2}{\mathsf{acv}_0^2}+5,\quad \mu_{e\text{-}}\to\infty.
\end{equation}
For example, choosing a desired relative uncertainty of $1\%$ we have $\mathsf{acv}_0=0.01$ and $n^\mathrm{opt}\approx 20\, 005$, which agrees with Janesick's approximation of $n^\mathrm{opt}\approx 20\, 000$ as given by (\ref{eq:SNL_OptSample_Janesick}). Unfortunately, all of the results in this section break down when sensor dark noise is non-negligible and thus have limited applications. We are now ready to move onto the more general case.


\section{Review of gain estimation theory: general case}
\label{sec:sub_shot_noise_limited}

As was the case in the previous section, general gain estimation requires the use of many symbols that can be combined and manipulated. To stay organized, Table \ref{tab:GEN_symbols} lists many of the key symbols used along with their corresponding formulae. Note that all of these symbols are constructed from six fundamental quantities: $\mu_{e\text{-}}$, $\mu_\mathscr D$, $\sigma_\mathscr D^2$, $g$, $n_1$, and $n_2$.

\begin{table}[htb]
\centering
\caption{\bf List of symbols and corresponding formulae associated with general estimation.}
\begin{tabular}{cc|cc}
\hline
symbol &formula &symbol &formula \\
\hline
$\mu_D$ &$\mu_\mathscr D/g$ &$\mu_P^2$ &$\mu_{e\text{-}}/g$ \\
$\sigma_D^2$ &$\sigma_\mathscr D^2/g^2$ &$\sigma_P^2$ &$\mu_{e\text{-}}/g^2$ \\
\hline
$\mu_{P+D}$ &$\mu_P+\mu_D$ &$\alpha_1$ &$(n_1-1)/2$ \\
$\sigma_{P+D}^2$ &$\sigma_P^2+\sigma_D^2$ &$\alpha_2$ &$(n_2-1)/2$ \\
\hline
$\mu_{\bar P}$ &$\mu_P$ &$\beta_1$ &$\alpha_1/\sigma_{P+D}^2$ \\
$\sigma_{\bar P}^2$ &$\sigma_{P+D}^2/n_1+\sigma_D^2/n_2$ &$\beta_2$ &$\alpha_2/\sigma_D^2$ \\
\hline
$\mu_{\hat P}$ &$\sigma_P^2$ &$g$ &$\frac{\mu_{P+D}-\mu_D}{\sigma_{P+D}^2-\sigma_D^2}$ \\
$\sigma_{\hat P}^2$ &$\alpha_1/\beta_1^2+\alpha_2/\beta_2^2$ &$-$ &$-$ \\
\hline
\end{tabular}
 \label{tab:GEN_symbols}
\end{table}


\subsection{Estimator derivation}

We now consider the more general case where the pixel exhibits both a bias and non-negligible dark noise: $\mathscr D\sim\mathcal N(\mu_\mathscr D,\sigma_\mathscr D^2)$ with $\mu_\mathscr D,\sigma_\mathscr D\neq 0$. We again note that in general, each pixel in the active sensor array will exhibit unique values of $\mu_\mathscr D$ and $\sigma_\mathscr D$. By linearity of the transfer function, the digital output of a pixel in the absence of illumination for some fixed, nonzero integration time is $D=\mathcal T(\mathscr D)$ with
\begin{equation}
    \mu_D\coloneqq\mathsf ED=\mu_\mathscr D/g
\end{equation}
and
\begin{equation}
   \sigma_D^2\coloneqq\mathsf{Var}D=\sigma_\mathscr D^2/g^2.
\end{equation}
Likewise, the digital output of the same pixel exposed to incident illumination for the same fixed, nonzero integration time gives the combined dark and photon induced signal $P+D=\mathcal T(\mathscr P+\mathscr D)=\mathcal T(\mathscr P)+\mathcal T(\mathscr D)$, which by the assumed independence of $\mathscr P$ and $\mathscr D$ further gives
\begin{equation}
    \mu_{P+D}\coloneqq\mathsf E(P+D)=\mu_P+\mu_D
\end{equation}
and
\begin{equation}
    \label{eq:dark_plus_shot_noise_relation}
    \sigma_{P+D}^2\coloneqq\mathsf{Var}(P+D)=\sigma_P^2+\sigma_D^2.
\end{equation}
Noting that $\mu_P=\mu_{P+D}-\mu_D$ and $\sigma_P^2=\sigma_{P+D}^2-\sigma_D^2$ leads to the modified gain relation
\begin{equation}
\label{eq:modified_gain_relation}
g=(\mu_{P+D}-\mu_D)/(\sigma_{P+D}^2-\sigma_D^2).
\end{equation}
As before, this relation suggests a natural estimator for $g$. Let $\{X_1,\dots,X_{n_1}\}$ be a sample of $n_1$ i.i.d.~observations of a pixel exposed to a constant level of illumination for a fixed, nonzero integration time and $\{Y_1,\dots,Y_{n_2}\}$ be an independent sample of $n_2$ i.i.d.~observations of the same pixel in the absence of illumination for the same fixed, nonzero integration time. Then we can estimate $g$ for our pixel with
\begin{equation}
\label{eq:sub_shot_noise_limited_estimator}
G=(\bar X-\bar Y)/(\hat X-\hat Y),
\end{equation} 
with bar $(\bar\cdot)$ and hat $(\hat\cdot)$ accents again denoting sample means and sample variances, respectively. In this context we see
\begin{equation}
\bar P\coloneqq\bar X-\bar Y    
\end{equation}
estimates the photon induced signal $\mu_P$, while
\begin{equation}
\hat P\coloneqq\hat X-\hat Y
\end{equation}
estimates the photon induced signal variance $\sigma_P^2$, in a similar manner to that of the estimators $\bar P$ and $\hat P$ in (\ref{eq:shot_noise_limited_estimator}). Note that now $\bar P$ no longer represents a single sample mean but rather the difference of sample means and likewise for $\hat P$.


\subsection{Historical developments}

Janesick (2007) was the first work to include the contributions of dark noise in conversion gain estimator uncertainty. He did this by studying a variant of (\ref{eq:sub_shot_noise_limited_estimator}) where the numerator is substituted with the random variable $\bar P$ ($\mathsf E\bar P=\mu_P$, $\mathsf{Var}\bar P=\sigma_P^2/n_1)$. The relative uncertainty of the resulting estimator was then approximated via the moments of its first-order Taylor polynomial \cite{janesick_2007}. Further imposing large and equal numbers of samples for both the bright and dark collection, $n_1=n_2=n$, Janesick obtained the approximation (c.f.~Eq.~6.7-6.8 in \cite{janesick_2007} using $N_\mathrm{PIX}\mapsto n$, $K\mapsto g$, $S\mapsto \mu_P$, $N\mapsto\sigma_{P+D}$, and $R\mapsto\sigma_D$)
\begin{equation}
\label{eq:relative uncertainty_general_estimator_Janesick_approx}
\mathsf{ACV}^2G\approx\frac{2}{n}+\frac{1}{n}\frac{1}{\mu_{e\text{-}}}+\frac{4}{n}\frac{\sigma_\mathscr D^2}{\mu_{e\text{-}}}+\frac{4}{n}\left(\frac{\sigma_\mathscr D^2}{\mu_{e\text{-}}}\right)^2.
\end{equation}
It was noted that when photon shot noise is large ($\mu_{e\text{-}}\to\infty)$ and dominates sensor dark noise ($\mu_{e\text{-}}\gg\sigma_\mathscr D^2$), this estimate reduces to the high illumination, shot-noise-limited estimate $\mathsf{ACV}G\approx\sqrt{2/n}$ discussed in Section \ref{sec:shot_noise_limited}.

Following Janesick's work, Hendrickson (2017) was the first attempt to draw exact statistical conclusions about the full estimator (\ref{eq:sub_shot_noise_limited_estimator}) by applying the normal model $X_k\sim\mathcal N(\mu_{P+D},\sigma_{P+D}^2)$ and $Y_k\sim\mathcal N(\mu_D,\sigma_D^2)$ to derive the density of $G$ in the form of the Centralized Inverse-Fano ({\sc cif}) distribution
\begin{equation}
f_G(g)=\int_{-\infty}^\infty |t|\phi(gt;\mu_{\bar P},\sigma_{\bar P})f_{\hat P}(t)\,\mathrm dt,
\end{equation}
where
\begin{equation}
    f_{\hat P}(t)=C\times
    \begin{cases}
    \frac{e^{\beta_2t}}{\Gamma(\alpha_2)}U\left({1-\alpha_2\atop 2-\alpha_1-\alpha_2};-(\beta_1+\beta_2)t\right) &t<0\\
    \frac{e^{-\beta_1t}}{\Gamma(\alpha_1)}U\left({1-\alpha_1\atop 2-\alpha_1-\alpha_2};(\beta_1+\beta_2)t\right) &t\geq 0,
    \end{cases}
\end{equation}
with $C=\beta_1^{\alpha_1}\beta_2^{\alpha_2}(\beta_1+\beta_2)^{1-\alpha_1-\alpha_2}$ is the gamma-difference distribution \cite{mathai_1993,krishna_2011,klar_2015,hancova_2022}, and $U(a,b,z)$ is Kummer's confluent hypergeometric function of the 2nd-kind \cite{Hendrickson:17}. Under this model it was shown that $G$ has ill-defined moments which is due to the fact that the tails of the probability density $f_G$ decay like those of the Cauchy density \cite{Lehmann_1988}. For this reason, Hendrickson (2019) \cite{hendrickson_2019} extended the notion of statistical moments in the same manner as Peng \cite{PENG_2008,PENG_2013} by deriving the first moment of $G$ in the sense of the Cauchy principal value $\mathsf E_\mathcal PG=\lim_{R\to\infty}\int_{-R}^Rtf_G(t)\,\mathrm dt$
\begin{multline}
\label{eq:first_pseudo_moment_of_G}
\mathsf E_\mathcal PG=g\frac{(\frac{\alpha_1}{\beta_1}-\frac{\alpha_2}{\beta_2})\beta_1^{\alpha_1}\beta_2^{\alpha_2}(\beta_1+\beta_2)^{1-\alpha_1-\alpha_2}}{(\alpha_1+\alpha_2-1)\operatorname{B}(\alpha_1,\alpha_2)}\biggl(\\%
\psi(\alpha_1)-\log\beta_1+\frac{(\alpha_1-1)\beta_2}{\alpha_2\beta_1}{_3F_2}\left({2-\alpha_1,1,1\atop 1+\alpha_2,2};-\frac{\beta_2}{\beta_1}\right)\\%
-\psi(\alpha_2)+\log\beta_2-\frac{(\alpha_2-1)\beta_1}{\alpha_1\beta_2}{_3F_2}\left({2-\alpha_2,1,1\atop 1+\alpha_1,2};-\frac{\beta_1}{\beta_2}\right)%
\biggr).
\end{multline}
Here, $\log z$ is the natural logarithm, $\psi(z)$ is the digamma function, $\operatorname{B}(\alpha,\beta)$ is the beta function, and ${_pF_q}(\mathbf a;\mathbf b;z)$ is the generalized hypergeometric function. It was shown that $\mathsf E_\mathcal PG$ agrees with actual sample means of conversion gain data when $\mathsf P(\hat P\leq 0)\approx 0$. Additionally, Hendrickson (2021) showed that no unbiased, finite variance estimator of the modified gain relation (\ref{eq:modified_gain_relation}) exists for all possible parameters under the normal model of noise despite $(\bar Y,\hat Y,\bar X,\hat X)$ constituting a complete sufficient statistic for the parameter $(\mu_D,\sigma_D^2,\mu_{P+D},\sigma_{P+D}^2)$ \cite[Theorem 14,]{hendrickson_2021}.


\subsection{Further developments}

The nonexistence of $G$'s moments lies in the fact that $\hat P$ has positive and continuous probability density at zero, which manifests as non-integrable singularities in the integral representations of these moments. To assign some notion of higher-order moments to $G$ we may take advantage of the independence of $\bar P$ and $\hat P$ and the concept of regularization to define pseudomoments as
\begin{equation}
\mathsf E_\mathcal PG^k\coloneqq(\mathsf E\bar P^k)\mathsf E_\mathcal P\hat P^{-k},\quad k\in\mathbb N,
\end{equation}
where $\mathcal P$ denotes the principal-value regularization
\begin{equation}
\label{eq:pseudomoment_definition}
    \mathsf E_\mathcal P\hat P^{-k}\coloneqq\lim_{\epsilon\to 0^+}\int_{\Bbb R\setminus(-\epsilon,\epsilon)}\frac{f_{\hat P}(t)}{t^k}\,\mathrm dt-h_k(\epsilon),
\end{equation}
with $h_1(\epsilon)=0$ and
\begin{equation}
    h_k(\epsilon)=\sum_{\ell=0}^{k-2}\frac{f_{\hat P}^{(\ell)}(0)}{\ell!}\left(\frac{1-(-1)^{k-\ell-1}}{(k-\ell-1)\epsilon^{k-\ell-1}}\right)
\end{equation}
for $k\geq 2$ \cite{fox_1957,Galapon_2016}. Alternative definitions and methods also exist for evaluating the pseudomoments. For example, we may express them via a moment generating function as
\begin{equation}
\mathsf E_\mathcal P\hat P^{-k}= \frac{1}{(k-1)!}\partial_\omega^{k-1}\mathcal H[f_{\hat P}](\omega)\Big|_{\omega=0}
\end{equation}
with
\begin{equation}
\label{eq:density_Hilbert_xform}
\mathcal H[f_{\hat P}](\omega)=\lim_{\epsilon\to 0^+}\int_{\mathbb R\setminus(\omega-\epsilon,\omega+\epsilon)}\frac{f_{\hat P}(t)}{t-\omega}\,\mathrm dt,
\end{equation}
denoting the Hilbert transform of the density $f_{\hat P}$ \cite{criscuolo_1997}. Regardless of the method used to evaluate them, the subscript $\mathcal P$ is there to remind the reader that these moments are regularized and do not exist in the traditional sense because the integrals representing them diverge.

Evaluating the pseudomoments proves to be quite challenging.  As pointed out earlier, the work in \cite{hendrickson_2019} was able to obtain an expression for the special case $k=1$, as given in (\ref{eq:first_pseudo_moment_of_G}), through the use of complex methods involving contour integration. For $k\geq 2$, one could implement (\ref{eq:pseudomoment_definition}) numerically; however, when $n_1$ and $n_2$ become even moderately large the evaluation of $f_{\hat P}$ becomes unstable causing numerical computation to fail.  In such a case we may approximate the distribution of $\hat P$ by a normal one with mean $\mu_{\hat P}$ and variance $\sigma_{\hat P}^2$ so that we may use the normal approximation given by Quenouille \cite{QUENOUILLE_1956}
\begin{equation}
\label{eq:negative_mgf_normal}
\mathcal H[f_{\hat P}](\omega)\sim\frac{\sqrt 2}{\sigma_{\hat P}}\mathcal D\left(\frac{\mu_{\hat P}-\omega}{\sqrt 2 \sigma_{\hat P}}\right).
\end{equation}
Here, $\mathcal D(z)\coloneqq e^{-z^2}\int_0^z e^{t^2}\,\mathrm dt$ denotes the Dawson integral. Higher-order derivatives for the Dawson integral are given by Barakat (1971), which allow us to deduce a closed-form, asymptotic approximation for the pseudomoments of $\hat P^{-1}$ given large $n_1$ and $n_2$ \cite{BARAKAT_1971}:
\begin{equation}
    \mathsf E_\mathcal P\hat P^{-k}\sim\frac{2}{(k-1)!}\frac{1}{(\sqrt 2\sigma_{\hat P})^k}(H_{k-1}(z_{\hat P})\mathcal D(z_{\hat P})-P_{k-2}(z_{\hat P})),
\end{equation}
where $z_{\hat P}=\mu_{\hat P}/(\sqrt 2\sigma_{\hat P})$, and $P_n$ is a polynomial satisfying $P_n(t)=2tP_{n-1}(t)-2nP_{n-2}(t)$ with $P_{-1}(t)=0$ and $P_0(t)=1$. Combining this result with the moments of $\bar P$ subsequently give us the large sample size asymptotic approximation of the pseudomoments for $G$
\begin{multline}
\label{eq:higher_order_pseudo_moment_of_G_aprx}
\mathsf E_\mathcal PG^k\sim\frac{2}{(k-1)!}\left(\frac{i\sigma_{\bar P}}{2\sigma_{\hat P}}\right)^k H_k(-iz_{\bar P})\\
\cdots\times(H_{k-1}(z_{\hat P})\mathcal D(z_{\hat P})-P_{k-2}(z_{\hat P})),
\end{multline}
where $z_{\bar P}=\mu_{\bar P}/(\sqrt 2\sigma_{\bar P})$. For example, in the case $k=1$ we obtain the large sample size (large $\alpha$) asymptotic approximation of (\ref{eq:first_pseudo_moment_of_G})
\begin{equation}
\label{eq:first_pseudo_moment_of_G_aprx}
\mathsf E_\mathcal PG\sim g\frac{\sqrt 2(\frac{\alpha_1}{\beta_1}-\frac{\alpha_2}{\beta_2})}{\sqrt{\frac{\alpha_1}{\beta_1^2}+\frac{\alpha_2}{\beta_2^2}}}\mathcal D\left(\frac{\frac{\alpha_1}{\beta_1}-\frac{\alpha_2}{\beta_2}}{\sqrt 2\sqrt{\frac{\alpha_1}{\beta_1^2}+\frac{\alpha_2}{\beta_2^2}}}\right).
\end{equation}

As a verification of the accuracy of these approximations, using the parameters $\mu_P=9\,\mathrm{DN}$, $\sigma_{P+D}^2=10\,\mathrm{DN}^2$, $\sigma_D^2=1\,\mathrm{DN}^2$ (so that $g=1$), $n_1=101$, and $n_2=51$ we calculated $\mathsf E_\mathcal PG$ using the exact expression (\ref{eq:first_pseudo_moment_of_G}) as well as the normal approximation (\ref{eq:first_pseudo_moment_of_G_aprx}) yielding $\mathsf E_\mathcal PG=1.02604$ and $\mathsf E_\mathcal PG\approx 1.02738$, respectively. This resulted in only a $0.13\%$ approximation error showing that the normal approximation in (\ref{eq:higher_order_pseudo_moment_of_G_aprx}) will approximate $\mathsf E_\mathcal PG$ as well as the higher-order pseudomoments for the chosen sample sizes. We typically deal with much larger sample sizes in {\sc pt} conversion gain estimation and therefore expect (\ref{eq:first_pseudo_moment_of_G_aprx}) to be a good approximation to the exact pseudomoments of $G$ in most scenarios.

Through the use of pseudomoments we can subsequently derive other quantities of interest for $G$ like the (pseudo) absolute coefficient of variation
\begin{equation}
\mathsf{ACV}_\mathcal P^2G=\mathsf{ACV}_\mathcal P^2\hat P^{-1}+(\mathsf{ACV}_\mathcal P^2\hat P^{-1})(\mathsf{ACV}^2\bar P)+\mathsf{ACV}^2\bar P
\end{equation}
where
\begin{equation}
\label{eq:pseudo_acv2}
\mathsf{ACV}^2_\mathcal P\hat P^{-1}=\frac{\mathsf E_\mathcal P\hat P^{-2}-(\mathsf E_\mathcal P\hat P^{-1})^2}{(\mathsf E_\mathcal P\hat P^{-1})^2},
\end{equation}
and $\mathsf{ACV}^2\bar P=\mathsf{Var}\bar P/(\mathsf E\bar P)^2$ is defined in the traditional sense. Likewise, we have the (pseudo) absolute relative bias
\begin{equation}
\mathsf{ARB}_\mathcal PG=\mathsf{ARB}_\mathcal P\hat P^{-1}
\end{equation}
with
\begin{equation}
\label{eq:pseudo_arb}
\mathsf{ARB}_\mathcal P\hat P^{-1}=\left|\frac{\mathsf E_\mathcal P\hat P^{-1}-(\mathsf E\hat P)^{-1}}{(\mathsf E\hat P)^{-1}}\right|.
\end{equation}

What remains is to address why these pseudomoments are useful for describing moments of actual data. The motivation for introducing pseudomoments was to assign analytical expressions to the moments $\mathsf E\hat P^{-k}$, which diverge on sets of the form $|\hat P|<\epsilon$.  This is clear from writing
\begin{equation}
\mathsf E\hat P^{-k}=\int_{\mathbb R\setminus(-\epsilon,\epsilon)}\frac{f_{\hat P}(t)}{t^k}\,\mathrm dt+\int_{-\epsilon}^\epsilon \frac{f_{\hat P}(t)}{t^k}\,\mathrm dt,
\end{equation}
where the first integral in this decomposition always converges, while the second integral always diverges for any choice of $k\in\mathbb N$ and $\epsilon>0$. As such, the principal value regularization provides a means of discarding the divergent terms arising out of the second integral to provide a finite expression for $\mathsf E\hat P^{-k}$. In practice we don't ever observe $\hat P$ out in the extreme tails of its assumed distribution due to the small probabilities of such events, and even more so, inaccuracies in our assumed noise model. In particular, recall that we assumed a normal distribution for the digital signal $X_k$ and $Y_k$ but in practice these quantities can only take on values between $0$ and $2^{N_{\mathrm{bits}}}-1$ with $N_{\mathrm{bits}}$ denoting the bit-depth of the analog-to-digital converter. Because the normal model assigns positive density outside this interval, we see the model inherently overestimates the tails of the actual data. Therefore, if $\mathsf{ACV}\hat P$ is small we won't observe $|\hat P|<\epsilon$ and the sample moments will agree with the pseudomoments. Again using the parameters following (\ref{eq:first_pseudo_moment_of_G_aprx}), $10^9$ pseudorandom observations of $\bar P$ and $\hat P$ were generated, which were then used to compute a sample of $10^9$ observations of $G$.  Due to a sufficiently small value of $\mathsf{ACV}\hat P$, all observed values of $\hat P$ were strictly positive, so we should expect the sample statistics to agree with the theoretical pseudomoments. Computing the sample mean yielded $\bar G=1.02604\dots$, which agreed with the exact value of $\mathsf E_\mathcal PG$ to six significant digits. Likewise, we would expect the higher-order sample moments to agree with their corresponding higher-order pseudomoments for these parameters.


\section{Optimal sample size pairs for conversion gain estimation}
\label{sec:sub_shot_noise_limited_optimal_sample sizes}

Now that we have a thorough understanding of the statistical characteristics for the general conversion gain estimator, we may begin to tackle the problem of optimal measurement. Recall from Section \ref{sec:shot_noise_limited} that the optimal sample size for the one-sample, shot-noise-limited estimator $G$ satisfied $\mathsf{ACV}G(n^\mathrm{opt})=\mathsf{acv}_0$ for any choice of $\mathsf{acv}_0>0$. In the present problem, we are now working with an estimator of two samples and must first define what is meant by optimal sample sizes in this two-sample case.
\begin{definition}[Optimal sample size pairs]
\label{def:optimal_sample_sizes}
Let $T=f(\mathbf X,\mathbf Y)$ be a statistic of a sample $\mathbf X$ of size $n_1$ and another sample $\mathbf Y$ of size $n_2$. Furthermore, let $\mathsf{ACV}T(n_1,n_2)$ denote the absolute coefficient of variation for $T$ as a function of the sample sizes. Then the optimal sample size pairs for $T$ shall be defined as the ordered pair $(n_1^\mathrm{opt},n_2^\mathrm{opt})$, which satisfies the system of equations
\begin{gather}
\inf_{n_2}\, \mathsf{ACV}T(N-n_2,n_2)\Big|_{N=n_1^{\mathrm{opt}}+n_2^{\mathrm{opt}},\,n_2=n_2^{\mathrm{opt}}}\label{eq:opt_soe_1}\\ 
\mathsf{ACV}T(n_1^{\mathrm{opt}},n_2^{\mathrm{opt}})=\mathsf{acv}_0.\label{eq:opt_soe_2}
\end{gather}
\end{definition}
In this system of equations we see (\ref{eq:opt_soe_1}) fixes the total number of samples to $N=n_1+n_2$ and solves for the $n_2$ that minimizes $\mathsf{ACV}T$. In the special case where $\mathsf{ACV}T(N-n_2,n_2)$ is strictly convex in $n_2$, this minimization can be solved via equating the derivative with zero: $\partial_{n_2}\mathsf{ACV}T(N-n_2,n_2)=0$. Solving (\ref{eq:opt_soe_1}) and then substituting $N\mapsto n_1^\mathrm{opt}+n_2^\mathrm{opt}$ and $n_2\mapsto n_2^\mathrm{opt}$ implicitly defines the optimal sample sizes as a function of each other, which we will call the \emph{optimality relation}. For example, in the proof of Lemma \ref{lem:barP_optimal_samples} we derive the optimality relation for the estimator $\bar P$ in the form $n_2^\mathrm{opt}=\frac{\sigma_D}{\sigma_{P+D}}n_1^\mathrm{opt}$, which expresses the relationship between the optimal sample sizes for $\bar P$. Substituting the optimality relation into (\ref{eq:opt_soe_2}) then scales the optimal sample sizes so that they not only satisfy the optimality relation but also achieve a prescribed final absolute coefficient of variation equal to $\mathsf{acv}_0$. In this way, $n_1^\mathrm{opt}$ and $n_2^\mathrm{opt}$ represent the sample sizes whose sum is the minimal possible number of total samples needed to force $\mathsf{ACV}T$ equal to $\mathsf{acv}_0$ and therefore serves as a good generalization of optimal sample size to the two-sample case.

In Section \ref{sec:sub_shot_noise_limited} we were unable to derive an exact expression for $\mathsf{ACV}_\mathcal PG$ and thus are forced to make some form of approximation to get a handle on this problem. Even if we did have an exact expression, substituting it in Definition \ref{def:optimal_sample_sizes} would almost certainly yield an intractable system of equations. To overcome these barriers, we will take the approach of linearizing $\hat P^{-1}$ by replacing it with it's first-order Taylor polynomial about $\mathsf E\hat P=\sigma_P^2$ and instead focus on
\begin{equation}
\label{eq:G_delta_estimator}
G_\delta=\bar P\times\hat P^{-1}_\delta,
\end{equation}
with
\begin{equation}
\label{eq:Pinv_delta_estimator}
\hat P^{-1}_\delta=\frac{1}{\sigma_P^2}-\frac{\hat P-\sigma_P^2}{\sigma_P^4}.
\end{equation}
The advantage we gain from this linearization is that $\hat P^{-1}_\delta$ and $G_\delta$ have simple and well-defined moments that can be used to make concrete conclusions about their statistical properties including their optimal sample sizes. Furthermore, so long as $\mathsf{ACV}\hat P$ is small we have $\hat P^{-1}_\delta\overset{d}{\approx}\hat P^{-1}$, which implies $G_\delta\overset{d}{\approx}G$ so that any conclusions we make about the random variables $\hat P^{-1}_\delta$ and $G_\delta$ apply to $\hat P^{-1}$ and $G$ when $\mathsf{ACV}\hat P$ is small. In particular, note that $\mathsf{ACV}\hat P=\mathsf{ACV}\hat P^{-1}_\delta$ so we can conclude that the optimal sample sizes for $\hat P^{-1}_\delta$ should be good approximations for those of $G$ when $\mathsf{acv}_0$ is chosen to be small and the optimal sample sizes of $\hat P^{-1}_\delta$ are approximately equal to those of $G_\delta$.

Our angle of attack from here will be as follows. We will first use Definition \ref{def:optimal_sample_sizes} to derive properties pertaining to the optimal sample sizes of $\bar P$ (Subsection \ref{subsec:analysis_of_barP}), which will be used later on to derive similar properties of the optimal sample sizes for $G_\delta$. Following analysis of $\bar P$ will be analogous study of $\hat P^{-1}_\delta$ (Subsection \ref{subsec:analysis_of_hatP_inv_delta}), where we derive the optimal sample sizes for $\hat P^{-1}_\delta$ and their properties. Subsection \ref{subsec:analysis_of_G_delta} then studies the estimator $G_\delta$ with the goal of understanding when the optimal sample sizes for $\hat P^{-1}_\delta$ are a good approximation for those of $G_\delta$. We first show that $\mathsf{ACV}G_\delta$ is dominated by $\mathsf{ACV}\hat P^{-1}_\delta$ at high-illumination (Theorem \ref{thm:ACVG_general_approx}) so that the optimal sample sizes for $\hat P^{-1}_\delta$ are asymptotically equal to those of $G_\delta$ at high-illumination. At the other end of the illumination range, Theorem \ref{thm:G_delta_asymptotic_optimal_samples} shows that the optimal sample sizes for $\hat P^{-1}_\delta$ are asymptotically proportional to those of $G_\delta$ with the constant of proportionality approaching one with increasing dark noise $\sigma_\mathscr D$. What this demonstrates is that the optimal sample sizes for $\hat P^{-1}_\delta$ should be excellent approximations to those of $G_\delta$ at any illumination level given sufficiently large dark noise. This observation leads us to construct a metric, $\mathcal E_\mathrm{opt}$, which indicates how good the optimal sample sizes of $\hat P^{-1}_\delta$ are as approximations for those of $G_\delta$ as a function of illumination level. By studying $\mathcal E_\mathrm{opt}$ at low-illumination we are able to put a rule-of-thumb on how large the dark noise must be to obtain a good approximation. In particular, results will show that for sensors with moderate dark noise of $\sigma_\mathscr D\geq 5\,e\text{-}$, the optimal sample sizes of $\hat P^{-1}_\delta$ are excellent approximations for those of $G_\delta$ at any illumination level. Theorem \ref{thm:G_opt_ARB_ACV} then confirms these findings by showing for small $\mathsf{acv}_0$
\begin{equation}
\mathsf{ACV}_\mathcal PG(n_1^\mathrm{opt},n_2^\mathrm{opt})=c\cdot\mathsf{acv}_0+\mathcal O(\mathsf{acv}_0^3),
\end{equation}
with $n_i^\mathrm{opt}$ again representing the optimal sample sizes for $\hat P^{-1}_\delta$ and $c\to 1$ with increasing illumination and/or increasing dark noise. Consequently, we show that the optimal sample sizes for $\hat P^{-1}_\delta$ serve as excellent approximations for those of $G$ at any illumination level when $\mathsf{acv}_0$ is small and $\sigma_\mathscr D\geq 5\,e\text{-}$.

Before proceeding, it will also be convenient to introduce the dimensionless quantity
\begin{equation}
\zeta=\sigma_D^2/\sigma_{P+D}^2.
\end{equation}
We will reparameterize all subsequent analysis in terms of this quantity via the following lemma.
\begin{lemma}
\label{lem:parameter_relations}
Under the assumed model we have $\mu_P=\sigma_D^2\zeta^{-1}(1-\zeta)g$ and $\sigma_P^2=\sigma_D^2\zeta^{-1}(1-\zeta)$.
\end{lemma}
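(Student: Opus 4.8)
The plan is to treat this as a direct algebraic consequence of the definitions collected in Table~\ref{tab:GEN_symbols} together with the noise decomposition~(\ref{eq:dark_plus_shot_noise_relation}); no approximation or analytic machinery is required. First I would establish the expression for $\sigma_P^2$, since it is the more fundamental of the two, and then obtain $\mu_P$ from it by invoking the photon transfer relation between mean and variance. The only thing to keep straight is which quantities are defined in units of $\mathrm{DN}$ (namely $\sigma_P^2$, $\sigma_D^2$, $\sigma_{P+D}^2$) so that the substitutions are dimensionally consistent throughout.

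For the variance identity, I would start from the definition $\zeta = \sigma_D^2/\sigma_{P+D}^2$ and rearrange to $\sigma_{P+D}^2 = \sigma_D^2\zeta^{-1}$. Substituting the additive decomposition $\sigma_{P+D}^2 = \sigma_P^2 + \sigma_D^2$ from~(\ref{eq:dark_plus_shot_noise_relation}) and solving for $\sigma_P^2$ then gives
\begin{equation}
\sigma_P^2 = \sigma_D^2\zeta^{-1} - \sigma_D^2 = \sigma_D^2\zeta^{-1}(1-\zeta),
\end{equation}
which is the second claimed relation.

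For the mean identity, I would use that the entries $\mu_P = \mu_{e\text{-}}/g$ and $\sigma_P^2 = \mu_{e\text{-}}/g^2$ in Table~\ref{tab:GEN_symbols} imply the photon transfer relation $\mu_P = g\,\sigma_P^2$ (eliminating $\mu_{e\text{-}}$). Multiplying the variance identity just obtained by $g$ then yields
\begin{equation}
\mu_P = g\,\sigma_P^2 = \sigma_D^2\zeta^{-1}(1-\zeta)\,g,
\end{equation}
completing the proof. There is no genuine obstacle here: the result is pure bookkeeping, and the only mild care needed is recognizing that $\mu_P = g\sigma_P^2$ is the $\mathrm{DN}$-unit restatement of the fundamental relation $g = \mu_P/\sigma_P^2$, so that the mean identity follows from the variance identity by a single multiplication rather than requiring a separate computation.
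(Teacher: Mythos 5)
Your proposal is correct and follows essentially the same route as the paper: the variance identity comes from combining $\sigma_{P+D}^2=\sigma_D^2/\zeta$ with $\sigma_P^2=\sigma_{P+D}^2-\sigma_D^2$, and the mean identity follows by multiplying by $g$ via $\mu_P=g\,\sigma_P^2$. The only cosmetic difference is that you obtain $\mu_P=g\,\sigma_P^2$ by eliminating $\mu_{e\text{-}}$ from the table entries, whereas the paper cites the modified gain relation~(\ref{eq:modified_gain_relation}) directly; these are the same fact.
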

Because $0<\sigma_D<\sigma_{P+D}$, it follows that $\zeta\in(0,1)$. As illumination decreases to zero, we find $\zeta\to 1^-$. Likewise, as illumination increases without bound $\zeta\to 0^+$, with $\zeta=0$ denoting a mathematical definition of the shot-noise-limit. Real sensors always contain some dark noise and are limited in well capacity; therefore, one can only achieve the shot-noise-limit in theory and never in real experiments. Lastly, the reader should take note that $\zeta$ is increasing as illumination decreases.


\subsection{Statistical analysis of bar-{\normalfont $P$}}
\label{subsec:analysis_of_barP}

We begin our analysis by studying the random variable $\bar P$ with the primary purpose of understanding its optimal sample sizes for later use in studying $G_\delta$. Using Lemma \ref{lem:parameter_relations}, the dark noise relation $\sigma_\mathscr D=\sigma_D\times g$, and the distributional result $\bar P\sim\mathcal N(\mu_P,\sigma_{P+D}^2/n_1+\sigma_D^2/n_2)$ we are able to write the squared absolute coefficient of variation as
\begin{equation}
\label{eq:ACV2_barP}
\mathsf{ACV}^2\bar P=\frac{1}{\sigma_\mathscr D^2}\frac{\zeta}{(1-\zeta)^2}\left(\frac{1}{n_1}+\frac{\zeta}{n_2}\right).
\end{equation}
With this expression in hand, we present some key properties of the optimal sample sizes for $\bar P$.
\begin{lemma}
\label{lem:barP_optimal_samples}
Let $n_1^\mathrm{opt}$ and $n_2^\mathrm{opt}$ denote the optimal sample sizes for $\bar P$. As the illumination decreases, $\zeta\to 1^-$, $n_2^\mathrm{opt}/n_1^\mathrm{opt}\to 1^-$, and $n_i^\mathrm{opt}\sim C_{\bar P}(1-\zeta)^{-2}$ for $i=1,2$ with $C_{\bar P}=2/(\sigma_\mathscr D^2\mathsf{acv}_0^2)$.
\end{lemma}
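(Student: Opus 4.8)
The plan is to apply Definition \ref{def:optimal_sample_sizes} directly to the closed form (\ref{eq:ACV2_barP}) for $\mathsf{ACV}^2\bar P$, which is tractable because it is an explicit rational function of $(n_1,n_2)$. Since squaring is monotone on nonnegative quantities, minimizing $\mathsf{ACV}\bar P$ is equivalent to minimizing $\mathsf{ACV}^2\bar P$, so I work throughout with the square.

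First I would carry out the inner minimization (\ref{eq:opt_soe_1}): fix the total $N=n_1+n_2$, substitute $n_1=N-n_2$ into (\ref{eq:ACV2_barP}), and minimize over $n_2$. The prefactor $\sigma_\mathscr D^{-2}\zeta(1-\zeta)^{-2}$ is independent of $n_2$, so it suffices to minimize $f(n_2)=(N-n_2)^{-1}+\zeta n_2^{-1}$. The derivative $f'(n_2)=(N-n_2)^{-2}-\zeta n_2^{-2}$ vanishes precisely when $n_2/(N-n_2)=\sqrt\zeta$, and since $f''(n_2)=2(N-n_2)^{-3}+2\zeta n_2^{-3}>0$ on $(0,N)$ the objective is strictly convex, so this stationary point is the unique minimizer. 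Rewriting it in terms of $n_1=N-n_2$ yields the optimality relation $n_2^\mathrm{opt}=\sqrt\zeta\,n_1^\mathrm{opt}=(\sigma_D/\sigma_{P+D})n_1^\mathrm{opt}$, using $\zeta=\sigma_D^2/\sigma_{P+D}^2$. The ratio claim is then immediate: $n_2^\mathrm{opt}/n_1^\mathrm{opt}=\sqrt\zeta$, and since the discussion following Lemma \ref{lem:parameter_relations} establishes $\zeta\to 1^-$ as illumination decreases, we obtain $n_2^\mathrm{opt}/n_1^\mathrm{opt}\to 1^-$.

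Next I would impose the scaling constraint (\ref{eq:opt_soe_2}). Substituting the optimality relation $n_2^\mathrm{opt}=\sqrt\zeta\,n_1^\mathrm{opt}$ back into (\ref{eq:ACV2_barP}) collapses the bracketed term to $(n_1^\mathrm{opt})^{-1}(1+\sqrt\zeta)$, so setting $\mathsf{ACV}^2\bar P=\mathsf{acv}_0^2$ and solving for $n_1^\mathrm{opt}$ gives the explicit expression
\begin{equation}
n_1^\mathrm{opt}=\frac{1}{\sigma_\mathscr D^2\mathsf{acv}_0^2}\frac{\zeta(1+\sqrt\zeta)}{(1-\zeta)^2}.
\end{equation}
Taking $\zeta\to 1^-$ sends $\zeta(1+\sqrt\zeta)\to 2$ while $(1-\zeta)^{-2}$ diverges, giving $n_1^\mathrm{opt}\sim C_{\bar P}(1-\zeta)^{-2}$ with $C_{\bar P}=2/(\sigma_\mathscr D^2\mathsf{acv}_0^2)$; the companion asymptotic for $n_2^\mathrm{opt}$ follows from $n_2^\mathrm{opt}=\sqrt\zeta\,n_1^\mathrm{opt}$ together with $\sqrt\zeta\to 1$.

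I do not anticipate a serious obstacle, as the whole argument is an explicit one-variable optimization followed by an elementary limit. The only points demanding care are verifying strict convexity so that the stationary point genuinely realizes the infimum required by (\ref{eq:opt_soe_1}) rather than a boundary value, and tracking which factors are subdominant as $\zeta\to 1^-$ so that the constant $C_{\bar P}$ emerges correctly; treating $n_1,n_2$ as continuous, as Definition \ref{def:optimal_sample_sizes} implicitly does when it differentiates, avoids any integrality complications.
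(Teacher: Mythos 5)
Your proposal is correct and follows essentially the same route as the paper's proof: exploit strict convexity of $\mathsf{ACV}^2\bar P(N-n_2,n_2)$ to reduce the inner minimization (\ref{eq:opt_soe_1}) to a stationary-point condition, obtain the optimality relation $n_2^\mathrm{opt}=\sqrt\zeta\,n_1^\mathrm{opt}$, solve the constraint (\ref{eq:opt_soe_2}) for the explicit expression $n_1^\mathrm{opt}=\frac{1}{\sigma_\mathscr D^2\mathsf{acv}_0^2}\frac{\zeta(1+\sqrt\zeta)}{(1-\zeta)^2}$, and pass to the limit $\zeta\to 1^-$. The only cosmetic differences are that you verify convexity via the second derivative explicitly and deduce the one-sided convergence $n_2^\mathrm{opt}/n_1^\mathrm{opt}\to 1^-$ directly from $\sqrt\zeta<1$, whereas the paper infers it from the inequality $n_2^\mathrm{opt}<n_1^\mathrm{opt}$ between the two closed-form expressions.
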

At low-illumination, from Lemma \ref{lem:barP_optimal_samples}, the optimal sample sizes for $\bar P$ are asymptotically proportional to $(1-\zeta)^{-2}$, demonstrating that as expected, infinite sample sizes are needed in the zero illumination limit.


\subsection{Statistical analysis of hat-{\normalfont $P^{-1}_\delta$}}
\label{subsec:analysis_of_hatP_inv_delta}

We now carry out a similar but more detailed analysis for $\hat P^{-1}_\delta$. Under the normal model of Section \ref{sec:sensor_noise_model} we have the distributional results $\hat X\sim\mathcal G(\alpha_1,\beta_1)$ and $\hat Y\sim\mathcal G(\alpha_2,\beta_2)$ (see Table \ref{tab:GEN_symbols}). As such, the difference $\hat P=\hat X-\hat Y$ is distributed as $\hat P\sim\mathcal{GD}(\alpha_1,\alpha_2,\beta_1,\beta_2)$, which is a gamma-difference variable. Working with the properties of the gamma-difference distribution as well as Lemma \ref{lem:parameter_relations} we can write
\begin{equation}
\label{eq:ACV2_hatPinv_delta}
\mathsf{ACV}^2\hat P^{-1}_\delta=\frac{2}{(1-\zeta)^2}\left(\frac{1}{n_1-1}+\frac{\zeta^2}{n_2-1}\right),
\end{equation}
which leads to the following expressions for optimal sample sizes of $\hat P^{-1}_\delta$.
\begin{lemma}
\label{lem:hatPinv_delta_optimal_samples}
The optimal sample sizes for $\hat P^{-1}_\delta$ are given by,
\begin{equation}
(n_1^\mathrm{opt},n_2^\mathrm{opt})=\left(\frac{2(1+\zeta)}{\mathsf{acv}_0^2(1-\zeta)^2}+1,\frac{2\zeta(1+\zeta)}{\mathsf{acv}_0^2(1-\zeta)^2}+1\right).
\end{equation}
\end{lemma}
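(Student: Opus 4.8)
The plan is to apply Definition \ref{def:optimal_sample_sizes} directly to the closed-form expression (\ref{eq:ACV2_hatPinv_delta}) for $\mathsf{ACV}^2\hat P^{-1}_\delta$, mirroring the treatment of $\bar P$ in Lemma \ref{lem:barP_optimal_samples}. First I would impose (\ref{eq:opt_soe_1}): fix the total sample count $N=n_1+n_2$, substitute $n_1=N-n_2$ into (\ref{eq:ACV2_hatPinv_delta}), and regard the result as a one-variable function $f(n_2)=\tfrac{2}{(1-\zeta)^2}\bigl(\tfrac{1}{N-n_2-1}+\tfrac{\zeta^2}{n_2-1}\bigr)$ to be minimized over $n_2$.

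Next I would verify that $f$ is strictly convex on the admissible range so that its minimizer is characterized by the vanishing of the first derivative. Each summand, $(N-n_2-1)^{-1}$ and $\zeta^2(n_2-1)^{-1}$, has strictly positive second derivative whenever $n_1-1=N-n_2-1>0$ and $n_2-1>0$, so strict convexity is immediate and this step should present no difficulty. Setting $\partial_{n_2}f=0$ then gives $(N-n_2-1)^{-2}=\zeta^2(n_2-1)^{-2}$; selecting the positive root (both $n_1-1$ and $n_2-1$ are positive and $\zeta>0$) produces the optimality relation $n_2^{\mathrm{opt}}-1=\zeta(n_1^{\mathrm{opt}}-1)$.

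Finally I would enforce the calibration condition (\ref{eq:opt_soe_2}), namely $\mathsf{ACV}^2\hat P^{-1}_\delta=\mathsf{acv}_0^2$. Substituting the optimality relation collapses the bracket in (\ref{eq:ACV2_hatPinv_delta}) to $(1+\zeta)/(n_1-1)$, and solving for $n_1-1$ yields $n_1^{\mathrm{opt}}=\tfrac{2(1+\zeta)}{\mathsf{acv}_0^2(1-\zeta)^2}+1$; feeding this back through $n_2^{\mathrm{opt}}-1=\zeta(n_1^{\mathrm{opt}}-1)$ gives $n_2^{\mathrm{opt}}=\tfrac{2\zeta(1+\zeta)}{\mathsf{acv}_0^2(1-\zeta)^2}+1$, as claimed.

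The computation is entirely routine; the only points requiring any care are the convexity check that licenses the first-order condition and the benign sign choice in extracting the optimality relation. Since sample sizes are treated as continuous here, the stated pair is the exact minimizer of the relaxed problem, with integer-valued sample sizes recovered by the usual rounding.
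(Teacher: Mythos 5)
Your proposal is correct and follows essentially the same route as the paper's proof: both verify strict convexity of $\mathsf{ACV}^2\hat P^{-1}_\delta(N-n_2,n_2)$ in $n_2$, obtain the optimality relation $n_2^{\mathrm{opt}}-1=\zeta(n_1^{\mathrm{opt}}-1)$ from the first-order condition, and then substitute it into the calibration equation $\mathsf{ACV}^2\hat P^{-1}_\delta=\mathsf{acv}_0^2$ to solve for $n_1^{\mathrm{opt}}$ and back out $n_2^{\mathrm{opt}}$. Your added attention to the convexity justification and the sign choice when taking square roots only makes explicit what the paper leaves to inspection.
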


\begin{remark}
Our goal for deriving the optimal sample sizes of $\hat P^{-1}_\delta$ was for the purpose of optimally estimating $\hat P^{-1}$. It turns out that we can alter the sample sizes of Lemma \ref{lem:hatPinv_delta_optimal_samples} to make them exact in the shot-noise-limit. See Remark \ref{rem:tuned_optimal_samples} in the appendix for more information.
\end{remark}

\begin{proposition}
\label{prop:optimal_sample_properties}
Let $n_1^\mathrm{opt}$ and $n_2^\mathrm{opt}$ denote the optimal sample sizes for $\hat P^{-1}_\delta$, then:
\begin{enumerate}
\item[(i)] $n_1^\mathrm{opt}$ and $n_2^\mathrm{opt}$ are strictly increasing with decreasing illumination ($\zeta$ increasing),
\item[(ii)] $n_2^\mathrm{opt}<n_1^\mathrm{opt}$ for all $\zeta\in[0,1)$,
\item[(iii)] As illumination decreases, $\zeta\to 1^-$, $n_2^\mathrm{opt}/n_1^\mathrm{opt}\to 1^-$, and $n_i^\mathrm{opt}\sim C_{\hat P^{-1}_\delta}(1-\zeta)^{-2}$ for $i=1,2$ where $C_{\hat P^{-1}_\delta}=4/\mathsf{acv}_0^2$.
\end{enumerate}
\end{proposition}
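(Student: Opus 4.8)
The plan is to work directly from the closed-form expressions for $(n_1^\mathrm{opt},n_2^\mathrm{opt})$ supplied by Lemma \ref{lem:hatPinv_delta_optimal_samples}; every claim reduces to elementary calculus and asymptotics of two rational functions of $\zeta$ on $[0,1)$. To streamline the bookkeeping I would abbreviate $f(\zeta)=(1+\zeta)(1-\zeta)^{-2}$ and $h(\zeta)=\zeta(1+\zeta)(1-\zeta)^{-2}$, so that $n_1^\mathrm{opt}=2\mathsf{acv}_0^{-2}f(\zeta)+1$ and $n_2^\mathrm{opt}=2\mathsf{acv}_0^{-2}h(\zeta)+1$. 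This reduces the entire proposition to monotonicity, ordering, and $\zeta\to 1^-$ limits of $f$ and $h$, with $\mathsf{acv}_0^{-2}>0$ playing no role beyond an overall positive scale.

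For (i), I would differentiate. Since the additive $1$ and the positive factor $\mathsf{acv}_0^{-2}$ drop out, strict monotonicity of each $n_i^\mathrm{opt}$ is equivalent to $f'>0$ and $h'>0$. A short computation gives the factored forms $f'(\zeta)=(3+\zeta)(1-\zeta)^{-3}$ and $h'(\zeta)=(1+3\zeta)(1-\zeta)^{-3}$, both manifestly positive for $\zeta\in[0,1)$; recalling from Lemma \ref{lem:parameter_relations} that $\zeta$ increases as illumination decreases, this is exactly the asserted strict increase.

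For (ii), I would form the difference and note that the additive constants cancel: $n_1^\mathrm{opt}-n_2^\mathrm{opt}=2\mathsf{acv}_0^{-2}\bigl(f(\zeta)-h(\zeta)\bigr)=2\mathsf{acv}_0^{-2}(1+\zeta)(1-\zeta)^{-1}$, which is strictly positive for every $\zeta\in[0,1)$, giving the strict ordering at once. For (iii), I would analyze the $\zeta\to 1^-$ behaviour: in each $n_i^\mathrm{opt}$ the $(1-\zeta)^{-2}$ singularity dominates the additive $1$, and evaluating the numerator factor as $\zeta\to 1$ (where $1+\zeta\to 2$ and $\zeta\to 1$) yields leading coefficient $4$ in both cases, so $n_i^\mathrm{opt}\sim 4\mathsf{acv}_0^{-2}(1-\zeta)^{-2}$, i.e. $C_{\hat P^{-1}_\delta}=4/\mathsf{acv}_0^2$. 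Dividing numerator and denominator of $n_2^\mathrm{opt}/n_1^\mathrm{opt}$ by the common dominant term $2\mathsf{acv}_0^{-2}(1+\zeta)(1-\zeta)^{-2}$ gives a ratio of the form $(\zeta+o(1))/(1+o(1))\to 1$; combining this limit with the strict inequality of (ii), which forces the ratio to stay below $1$, upgrades the convergence to $1^-$ from below.

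\emph{Main obstacle.} There is no genuine difficulty here: every part is a direct consequence of the explicit formula. The only care required is routine bookkeeping, namely carrying the two derivative computations through to their clean factored forms, observing that the additive $1$ is asymptotically negligible against the $(1-\zeta)^{-2}$ term, and invoking part (ii) to promote the bare ratio limit to one-sided ($1^-$) convergence.
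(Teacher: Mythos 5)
Your proof is correct and takes exactly the route the paper intends: the paper gives no explicit proof of this proposition, treating it as an immediate consequence of the closed-form sample sizes in Lemma \ref{lem:hatPinv_delta_optimal_samples}, which is precisely what you work out. Your factored derivatives $f'(\zeta)=(3+\zeta)(1-\zeta)^{-3}$ and $h'(\zeta)=(1+3\zeta)(1-\zeta)^{-3}$, the difference $n_1^\mathrm{opt}-n_2^\mathrm{opt}=2\mathsf{acv}_0^{-2}(1+\zeta)(1-\zeta)^{-1}$, the leading coefficient $4/\mathsf{acv}_0^2$, and the use of (ii) to make the ratio limit one-sided are all verified and complete.
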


\begin{corollary}
\label{cor:minimum_possible_sample_sizes}
Let $n_1^\mathrm{opt}(\zeta)$ and $n_2^\mathrm{opt}(\zeta)$ denote the optimal sample sizes for $\hat P^{-1}_\delta$ as a function of $\zeta$. To estimate $\hat P^{-1}_\delta$ to a relative uncertainty of $\mathsf{acv}_0$, one needs at minimum a total of $N^\mathrm{opt}(0)=n_1^\mathrm{opt}(0)+n_2^\mathrm{opt}(0)=2/\mathsf{acv}_0+2$ observations.
\end{corollary}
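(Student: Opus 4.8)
The plan is to treat the total optimal sample size $N^\mathrm{opt}(\zeta)=n_1^\mathrm{opt}(\zeta)+n_2^\mathrm{opt}(\zeta)$ as an explicit function of $\zeta\in[0,1)$ built directly from the closed forms supplied by Lemma \ref{lem:hatPinv_delta_optimal_samples}, and then to locate its minimizer over the admissible range of illumination. Adding the two expressions from that lemma consolidates the entire $\zeta$-dependence into a single term,
\begin{equation}
N^\mathrm{opt}(\zeta)=\frac{2}{\mathsf{acv}_0^2}\left(\frac{1+\zeta}{1-\zeta}\right)^2+2,
\end{equation}
since the additive constants $+1$ from each coordinate combine to $+2$ and the two numerators $2(1+\zeta)$ and $2\zeta(1+\zeta)$ sum to $2(1+\zeta)^2$ over the common denominator $\mathsf{acv}_0^2(1-\zeta)^2$.

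The key step is then a monotonicity argument. First I would observe that on $[0,1)$ the map $\zeta\mapsto(1+\zeta)/(1-\zeta)$ is positive and strictly increasing, so its square is strictly increasing and hence $N^\mathrm{opt}(\zeta)$ is strictly increasing in $\zeta$. Equivalently, this monotonicity can be read off from Proposition \ref{prop:optimal_sample_properties}(i), which already asserts that each of $n_1^\mathrm{opt}$ and $n_2^\mathrm{opt}$ increases strictly as illumination decreases ($\zeta$ increasing); the sum of two strictly increasing functions is strictly increasing. Because $N^\mathrm{opt}$ is continuous on $[0,1)$ and strictly increasing, its infimum is attained at the left endpoint, and I would evaluate there: setting $\zeta=0$ gives $(1+\zeta)/(1-\zeta)=1$ and therefore $N^\mathrm{opt}(0)=2/\mathsf{acv}_0^2+2$, matching the claimed value (with $n_1^\mathrm{opt}(0)=2/\mathsf{acv}_0^2+1$ and $n_2^\mathrm{opt}(0)=1$).

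There is essentially no deep obstacle here, as the statement reduces to algebra plus a one-line monotonicity check; the only point requiring a word of care is the interpretation of ``at minimum.'' The minimizing value $\zeta=0$ corresponds to the shot-noise-limit, which the discussion following Lemma \ref{lem:parameter_relations} identifies as a purely mathematical boundary that real sensors (with nonzero dark noise and finite well capacity) cannot reach. Thus I would frame the conclusion as a universal lower bound: even under the most favorable illumination conditions one cannot estimate $\hat P^{-1}_\delta$ to relative uncertainty $\mathsf{acv}_0$ with fewer than $N^\mathrm{opt}(0)=2/\mathsf{acv}_0^2+2$ total observations, and any physically realizable setting ($\zeta>0$) strictly exceeds this count.
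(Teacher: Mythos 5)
Your proof is correct and takes essentially the same route as the paper, which gives no formal proof but states that the corollary follows from the monotonicity of the optimal sample sizes in $\zeta$ (Proposition \ref{prop:optimal_sample_properties}(i)) together with evaluating Lemma \ref{lem:hatPinv_delta_optimal_samples} at $\zeta=0$; your explicit consolidation $N^\mathrm{opt}(\zeta)=\frac{2}{\mathsf{acv}_0^2}\bigl(\frac{1+\zeta}{1-\zeta}\bigr)^2+2$ is a clean way of packaging the same argument. Note also that your value $2/\mathsf{acv}_0^2+2$ is the correct one: the statement's printed ``$2/\mathsf{acv}_0+2$'' is a typo in the paper, as confirmed by its own numerical example ($\mathsf{acv}_0=0.01$ yielding $N^\mathrm{opt}(0)=20\,002$) and by the comparison with the shot-noise-limited result $2/\mathsf{acv}_0^2+5$ in (\ref{eq:shot_noise_limited_optimal_samples}).
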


Proposition \ref{prop:optimal_sample_properties} demonstrates that the optimal sample sizes for $\hat P^{-1}_\delta$ have similar characteristics for those of $\bar P$. In particular, we see at low-illumination that the optimal sample sizes are asymptotically equal and asymptotically proportional to $(1-\zeta)^{-2}$ showing that infinite sample sizes are again needed in the zero illumination limit. Furthermore, Corollary \ref{cor:minimum_possible_sample_sizes} takes advantage of the monotonicity of the optimal sample sizes in $\zeta$ in order to derive a lower bound to the minimum total number of samples needed to achieve an absolute coefficient of variation for $\hat P^{-1}_\delta$ equal to $\mathsf{acv}_0$. Note that this lower bound closely matches the asymptotic result for shot-noise-limited estimator of $g$ given in (\ref{eq:shot_noise_limited_optimal_samples}). Choosing $\mathsf{acv}_0=0.01$ gives the lower bound $N^\mathrm{opt}(0)=20\,002$, which serves to show the large sample sizes needed to obtain typically desired measurement uncertainties.

Now that we have explicit expressions for the optimal sample sizes of $\hat P^{-1}_\delta$, and know their properties, we can assess the accuracy of the approximation obtained by using the optimal sample sizes of $\hat P^{-1}_\delta$ in place of those for $\hat P^{-1}$.

\begin{theorem}
\label{thm:hatPinv_opt_ARB_ACV}
Let $\hat P^{-1}_\mathrm{opt}$ denote the estimator $\hat P^{-1}$ as a function of the optimal sample sizes for $\hat P^{-1}_\delta$. Then as $\mathsf{acv}_0\to 0^+$
\begin{equation}
\begin{aligned}
\mathsf{ACV}_\mathcal P\hat P^{-1}_\mathrm{opt} &=\mathsf{acv}_0+3\mathsf{acv}_0^3+\mathcal O(\mathsf{acv}_0^5),\\
\mathsf{ARB}_\mathcal P\hat P^{-1}_\mathrm{opt} &=\mathsf{acv}_0^2+3\mathsf{acv}_0^4+\mathcal O(\mathsf{acv}_0^6).
\end{aligned}
\end{equation}
\end{theorem}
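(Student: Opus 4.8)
The plan is to evaluate the pseudomoments $\mathsf E_\mathcal P\hat P^{-1}$ and $\mathsf E_\mathcal P\hat P^{-2}$ at the sample sizes of Lemma~\ref{lem:hatPinv_delta_optimal_samples} and expand everything as a series in $\mathsf{acv}_0$. The crucial first step is to record a drastic simplification occurring at these sample sizes. By the delta-method identity $\mathsf{ACV}\hat P^{-1}_\delta=\mathsf{ACV}\hat P$ together with the defining property (\ref{eq:opt_soe_2}) of the optimal sample sizes, we have $\mathsf{ACV}\hat P=\mathsf{acv}_0$ \emph{exactly}. Since $\mu_{\hat P}=\sigma_P^2>0$, this forces $\sigma_{\hat P}=\mathsf{ACV}\hat P\cdot\mu_{\hat P}=\mathsf{acv}_0\,\sigma_P^2$, so the Dawson argument collapses to
\[
z_{\hat P}=\frac{\mu_{\hat P}}{\sqrt2\,\sigma_{\hat P}}=\frac{1}{\sqrt2\,\mathsf{acv}_0},
\]
a function of $\mathsf{acv}_0$ alone, with no residual dependence on $\zeta$ or $\sigma_\mathscr D$. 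This universality is what makes the expansion clean.

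With $z_{\hat P}=1/(\sqrt2\,\mathsf{acv}_0)\to\infty$ as $\mathsf{acv}_0\to0^+$, I would feed this into the Dawson-based pseudomoment formulas, which for $k=1,2$ reduce to $\mathsf E_\mathcal P\hat P^{-1}=\tfrac{\sqrt2}{\sigma_{\hat P}}\mathcal D(z_{\hat P})$ and $\mathsf E_\mathcal P\hat P^{-2}=\tfrac1{\sigma_{\hat P}^2}\bigl(2z_{\hat P}\mathcal D(z_{\hat P})-1\bigr)$. Using the large-argument expansion $\mathcal D(z)=\tfrac1{2z}+\tfrac1{4z^3}+\tfrac3{8z^5}+\cdots$ with $1/z=\sqrt2\,\mathsf{acv}_0$ gives $\mathcal D(z_{\hat P})=\tfrac{1}{\sqrt2}\bigl(\mathsf{acv}_0+\mathsf{acv}_0^3+3\mathsf{acv}_0^5+\cdots\bigr)$, whence, after cancelling the factors of $\sigma_{\hat P}=\mathsf{acv}_0\sigma_P^2$,
\[
\mathsf E_\mathcal P\hat P^{-1}=\tfrac1{\sigma_P^2}\bigl(1+\mathsf{acv}_0^2+3\mathsf{acv}_0^4+\cdots\bigr),\qquad \mathsf E_\mathcal P\hat P^{-2}=\tfrac1{\sigma_P^4}\bigl(1+3\mathsf{acv}_0^2+15\mathsf{acv}_0^4+\cdots\bigr).
\]

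The two stated results then drop out of (\ref{eq:pseudo_acv2}) and (\ref{eq:pseudo_arb}). Since $(\mathsf E\hat P)^{-1}=1/\mu_{\hat P}=1/\sigma_P^2$, the ratio defining $\mathsf{ARB}_\mathcal P\hat P^{-1}$ immediately yields $\mathsf{acv}_0^2+3\mathsf{acv}_0^4+\mathcal O(\mathsf{acv}_0^6)$. For the ACV, squaring the first expression and subtracting gives $\mathsf E_\mathcal P\hat P^{-2}-(\mathsf E_\mathcal P\hat P^{-1})^2=\sigma_P^{-4}(\mathsf{acv}_0^2+8\mathsf{acv}_0^4+\cdots)$; dividing by $(\mathsf E_\mathcal P\hat P^{-1})^2=\sigma_P^{-4}(1+2\mathsf{acv}_0^2+\cdots)$ and taking a square root produces $\mathsf{acv}_0+3\mathsf{acv}_0^3+\mathcal O(\mathsf{acv}_0^5)$. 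All of this is routine bookkeeping once $z_{\hat P}=1/(\sqrt2\,\mathsf{acv}_0)$ is in hand; the only care required is to carry $\mathcal D$ to enough terms that the $\mathsf{acv}_0^4$ coefficient of $\mathsf E_\mathcal P\hat P^{-2}$ is correct.

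The main obstacle is justifying use of the normal/Dawson pseudomoment formulas to the orders claimed, since they arise from approximating the gamma-difference law of $\hat P$ by a normal one matching its first two moments. The saving grace is that Lemma~\ref{lem:hatPinv_delta_optimal_samples} forces $n_1^\mathrm{opt},n_2^\mathrm{opt}\sim\mathsf{acv}_0^{-2}\to\infty$ as $\mathsf{acv}_0\to0^+$, so the normal approximation becomes asymptotically exact; the residual non-normality (governed by the higher cumulants of $\hat P$, which decay with growing sample size) must be shown to enter only beyond $\mathsf{acv}_0^3$ for the ACV and $\mathsf{acv}_0^4$ for the ARB, so that it is absorbed into the stated error terms. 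Verifying this cumulant bookkeeping is the one genuinely delicate point; everything else is a direct asymptotic expansion.
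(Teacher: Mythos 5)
Your proposal is correct and follows essentially the same route as the paper's proof: both hinge on the observation that $\mathsf{ACV}\hat P_\mathrm{opt}=\mathsf{ACV}\hat P^{-1}_{\delta,\mathrm{opt}}=\mathsf{acv}_0$ exactly, so that $z_{\hat P}=1/(\sqrt 2\,\mathsf{acv}_0)$, then invoke the CLT-justified normal (Quenouille/Dawson) pseudomoment formulas together with the large-argument Dawson expansion and finish by power-series manipulation. The only difference is cosmetic bookkeeping --- you expand the $k=1,2$ pseudomoments explicitly while the paper routes the ACV through the ARB and a formal series inversion --- and the normal-approximation justification you flag as delicate is handled no more rigorously in the paper, which simply passes to the CLT limit.
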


As Theorem \ref{thm:hatPinv_opt_ARB_ACV} shows, when  $\mathsf{acv}_0$ is small, the optimal samples sizes for $\hat P^{-1}_\delta$ force $\mathsf{ACV}_\mathcal P\hat P^{-1}$ to be nearly $\mathsf{acv}_0$, which indicates a good approximation. Since the secondary term in the expansion for $\mathsf{ACV}_\mathcal P\hat P^{-1}_\mathrm{opt}$ is cubic, we expect the approximation to be quite good for any small choice of $\mathsf{acv}_0$, say $\mathsf{acv}_0<0.1$. This restriction is not problematic as one generally does not seek to measure $1/\sigma_P^2$ with greater than $10\%$ uncertainty. Furthermore, Theorem \ref{thm:hatPinv_opt_ARB_ACV} also quantifies the bias of $\hat P^{-1}_\mathrm{opt}$ showing that the main term is quadratic in $\mathsf{acv}_0$. As such, we expect the estimator $\hat P^{-1}$ to be nearly unbiased for $1/\sigma_P^2$ (in the sense that $\mathsf E_\mathcal P\hat P^{-1}\approx 1/\sigma_P^2$) when subject to the optimal sample sizes for $\hat P^{-1}_\delta$ for a small choice of $\mathsf{acv}_0$.


\subsection{Statistical analysis of {\normalfont $G_\delta$}}
\label{subsec:analysis_of_G_delta}

Equipped with the results of Subsections \ref{subsec:analysis_of_barP}-\ref{subsec:analysis_of_hatP_inv_delta}, we are now ready to perform a statistical analysis on $G_\delta$ with the goal of determining when the optimal sample sizes for $\hat P^{-1}_\delta$ serve as good approximations for those of $G_\delta$. Using the formula in Lemma \ref{lem:ACV2_of_product} we first write the squared absolute coefficient of variation for $G_\delta$ as
\begin{equation}
\mathsf{ACV}^2G_\delta=\mathsf{ACV}^2\bar P+(\mathsf{ACV}^2\bar P)(\mathsf{ACV}^2\hat P^{-1}_\delta)+\mathsf{ACV}^2\hat P^{-1}_\delta,
\end{equation}
with $\mathsf{ACV}^2\bar P$ and $\mathsf{ACV}^2\hat P^{-1}_\delta$ given in (\ref{eq:ACV2_barP}) and (\ref{eq:ACV2_hatPinv_delta}), respectively. Our first goal here is to show the estimator $G_\delta$ behaves like previous estimators for $g$ in the sense that its uncertainty is dominated by that of $\hat P^{-1}_\delta$ at high-illumination (c.f.~Theorem \ref{thm:ACVG_shot_limited_approx}).
\begin{theorem}
\label{thm:ACVG_general_approx}
Let $G_\delta$ and $\hat P^{-1}_\delta$ be as given in (\ref{eq:G_delta_estimator}) and (\ref{eq:Pinv_delta_estimator}), respectively. As illumination increases, $\zeta\to 0^+$ and
\begin{equation}
\mathsf{ACV}G_\delta=\mathsf{ACV}\hat P^{-1}_\delta\left(1+\frac{1}{\sigma_\mathscr D^2}\frac{n_1+1}{4n_1}\zeta+\mathcal O(\zeta^2)\right).
\end{equation}
\end{theorem}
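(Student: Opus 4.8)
The plan is to expand $\mathsf{ACV}G_\delta$ around the high-illumination limit $\zeta\to 0^+$ and show that the leading correction to the factor $\mathsf{ACV}\hat P^{-1}_\delta$ is exactly the claimed $\frac{1}{\sigma_\mathscr D^2}\frac{n_1+1}{4n_1}\zeta$. Starting from the product formula in Lemma~\ref{lem:ACV2_of_product},
\begin{equation}
\mathsf{ACV}^2G_\delta=\mathsf{ACV}^2\hat P^{-1}_\delta\left(1+\mathsf{ACV}^2\bar P\right)+\mathsf{ACV}^2\bar P,
\end{equation}
I would factor out $\mathsf{ACV}^2\hat P^{-1}_\delta$ to write
\begin{equation}
\mathsf{ACV}^2G_\delta=\mathsf{ACV}^2\hat P^{-1}_\delta\left(1+\mathsf{ACV}^2\bar P+\frac{\mathsf{ACV}^2\bar P}{\mathsf{ACV}^2\hat P^{-1}_\delta}\right),
\end{equation}
so that taking a square root and isolating the bracketed factor reduces the theorem to controlling the two ratios $\mathsf{ACV}^2\bar P$ and $\mathsf{ACV}^2\bar P/\mathsf{ACV}^2\hat P^{-1}_\delta$ as $\zeta\to 0^+$.

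The key computation is the small-$\zeta$ behavior of these two quantities, both of which are available in closed form from (\ref{eq:ACV2_barP}) and (\ref{eq:ACV2_hatPinv_delta}). From (\ref{eq:ACV2_barP}) one sees $\mathsf{ACV}^2\bar P=\frac{1}{\sigma_\mathscr D^2}\frac{\zeta}{(1-\zeta)^2}(\frac{1}{n_1}+\frac{\zeta}{n_2})$, which is already $\mathcal O(\zeta)$; its leading term is $\frac{1}{\sigma_\mathscr D^2 n_1}\zeta$. For the ratio, dividing by (\ref{eq:ACV2_hatPinv_delta}) gives
\begin{equation}
\frac{\mathsf{ACV}^2\bar P}{\mathsf{ACV}^2\hat P^{-1}_\delta}
=\frac{1}{2\sigma_\mathscr D^2}\,\zeta\,\frac{\frac{1}{n_1}+\frac{\zeta}{n_2}}{\frac{1}{n_1-1}+\frac{\zeta^2}{n_2-1}},
\end{equation}
whose leading term as $\zeta\to 0^+$ is $\frac{1}{2\sigma_\mathscr D^2}\frac{n_1-1}{n_1}\zeta$. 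Adding the two leading contributions yields a bracket of the form $1+\frac{1}{\sigma_\mathscr D^2}\left(\frac{1}{n_1}+\frac{n_1-1}{2n_1}\right)\zeta+\mathcal O(\zeta^2)$, and simplifying the coefficient gives $\frac{1}{n_1}+\frac{n_1-1}{2n_1}=\frac{2+(n_1-1)}{2n_1}=\frac{n_1+1}{2n_1}$. Finally, the square-root expansion $\sqrt{1+a\zeta+\mathcal O(\zeta^2)}=1+\tfrac12 a\zeta+\mathcal O(\zeta^2)$ halves this coefficient, producing exactly $\frac{1}{\sigma_\mathscr D^2}\frac{n_1+1}{4n_1}\zeta$.

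The step I expect to require the most care is the bookkeeping of orders: I must confirm that the $\frac{\zeta}{n_2}$ term inside $\mathsf{ACV}^2\bar P$ and the $\frac{\zeta^2}{n_2-1}$ term in the denominator of the ratio are genuinely $\mathcal O(\zeta^2)$ after multiplication by the prefactor $\zeta$, so that they do not contaminate the $\mathcal O(\zeta)$ coefficient. Since $n_1$ and $n_2$ are held fixed in this expansion (the limit is in illumination, i.e.\ in $\zeta$, not in sample size), these terms are indeed higher order and may be absorbed into the remainder; the factors $(1-\zeta)^{-2}=1+\mathcal O(\zeta)$ likewise contribute only to $\mathcal O(\zeta^2)$ once multiplied by the leading $\zeta$. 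With those order estimates pinned down, the square-root expansion is routine and the stated result follows.
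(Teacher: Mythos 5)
Your proposal is correct and takes essentially the same route as the paper's proof: both divide the product formula of Lemma~\ref{lem:ACV2_of_product} by $\mathsf{ACV}^2\hat P^{-1}_\delta$, expand $\mathsf{ACV}^2\bar P$ and the cross term to first order in $\zeta$ (obtaining the coefficient $\tfrac{1}{n_1}+\tfrac{n_1-1}{2n_1}=\tfrac{n_1+1}{2n_1}$), and finish with the square-root expansion. The only cosmetic difference is that you expand the ratio $\mathsf{ACV}^2\bar P/\mathsf{ACV}^2\hat P^{-1}_\delta$ directly, where the $(1-\zeta)^{-2}$ factors cancel exactly, whereas the paper expands $\mathsf{ACV}^{-2}\hat P^{-1}_\delta$ on its own; your order bookkeeping is sound.
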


Since $\mathsf{ACV}G_\delta\sim\mathsf{ACV}\hat P^{-1}_\delta$ we have by the same reasoning in Section \ref{sec:shot_noise_limited} that the optimal sample sizes for $\hat P^{-1}_\delta$ are asymptotically equal to those of $G_\delta$ at high-illumination with equality in the shot-noise-limit $(\zeta=0)$. However, if our goal is optimal sampling for the general case, this approximation must also hold in the low-illumination regime $(\zeta\to 1^-)$. To determine how good this approximation is at low-illumination, we may consider the following theorem.
\begin{theorem}
\label{thm:G_delta_asymptotic_optimal_samples}
Let $n_1^\mathrm{opt}$ and $n_2^\mathrm{opt}$ denote the optimal sample sizes for $G_\delta$. Then, as illumination decreases, $\zeta\to 1^-$, $n_2^\mathrm{opt}/n_1^\mathrm{opt}\to 1^-$, and $n_i^\mathrm{opt}\sim C_{G_\delta}(1-\zeta)^{-2}$ for $i=1,2$ with
\begin{equation}
C_{G_\delta}=\frac{2}{\mathsf{acv}_0^2}\left(1+\frac{1}{2\sigma_\mathscr D^2}+\left(\left(1+\frac{1}{2\sigma_\mathscr D^2}\right)^2+\frac{2}{\sigma_\mathscr D^2}\mathsf{acv}_0^2\right)^{1/2}\right).
\end{equation}
\end{theorem}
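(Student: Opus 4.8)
The plan is to work from the product representation $\mathsf{ACV}^2 G_\delta=(1+a)(1+b)-1$, where I abbreviate $a=\mathsf{ACV}^2\bar P$ and $b=\mathsf{ACV}^2\hat P^{-1}_\delta$ as given in (\ref{eq:ACV2_barP}) and (\ref{eq:ACV2_hatPinv_delta}). Writing $u=1-\zeta$, I anticipate from Lemma \ref{lem:barP_optimal_samples} and Proposition \ref{prop:optimal_sample_properties} that both optimal sizes diverge like $u^{-2}$, so I posit $n_i^\mathrm{opt}\sim C_{G_\delta}u^{-2}$ and aim to pin down the common constant. Following Definition \ref{def:optimal_sample_sizes}, I would first fix $N=n_1+n_2$ and minimize, then impose $\mathsf{ACV}G_\delta=\mathsf{acv}_0$; the two conditions are solved jointly in the limit $\zeta\to 1^-$.

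For the optimality relation I differentiate $(1+a)(1+b)$ with respect to $n_2$ at fixed $N$ (equivalently minimize $\mathsf{ACV}^2 G_\delta$, which is monotone in $\mathsf{ACV}G_\delta$), giving $a'(1+b)+(1+a)b'=0$ with $'=\partial_{n_2}$ and $\partial_{n_2}n_1=-1$. Since $n_1,n_2\to\infty$, I drop the $-1$ shifts, set $n_1=n$, $n_2=\rho n$, and factor the common $u^{-2}n^{-2}$ out of $a'$ and $b'$. The condition collapses to $A(1-\zeta/\rho^2)+B(1-\zeta^2/\rho^2)=0$ with $A=\sigma_\mathscr D^{-2}\zeta(1+b)$ and $B=2(1+a)$ evaluated in the limit, whence
\begin{equation}
\rho^2=\frac{A\zeta+B\zeta^2}{A+B}.
\end{equation}
Because $\zeta,\zeta^2<1$, the numerator is strictly below $A+B$, so $\rho<1$ and $\rho^2\to 1$ as $\zeta\to 1$; this delivers $n_2^\mathrm{opt}/n_1^\mathrm{opt}\to 1^-$ and shows both sizes share the same leading constant. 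A quick check that $a'>0$ and $b'>0$ at $n_1=n_2$ (using $\zeta<1$) confirms the minimizer lies at $n_2<n_1$.

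With $\rho\to 1$ and $n_i\sim C_{G_\delta}u^{-2}$, the limits are routine: $\tfrac1{n_1}+\tfrac{\zeta}{n_2}\to 2/n$ and $\tfrac1{n_1-1}+\tfrac{\zeta^2}{n_2-1}\to 2/n$, so $a\to 2/(\sigma_\mathscr D^2 C_{G_\delta})$ and $b\to 4/C_{G_\delta}$ (consistent with $C_{\bar P}$ and $C_{\hat P^{-1}_\delta}$ from the earlier results). Substituting into the constraint $a+b+ab=\mathsf{acv}_0^2$ and clearing $C_{G_\delta}^2$ produces the quadratic
\begin{equation}
\mathsf{acv}_0^2\,C_{G_\delta}^2-\left(4+\frac{2}{\sigma_\mathscr D^2}\right)C_{G_\delta}-\frac{8}{\sigma_\mathscr D^2}=0,
\end{equation}
whose positive root, after factoring $4(1+\tfrac1{2\sigma_\mathscr D^2})$ from the linear coefficient and $16$ from the discriminant, rearranges exactly into the stated $C_{G_\delta}$. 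I note that the constant term $-8/\sigma_\mathscr D^2$, and hence the square root appearing in $C_{G_\delta}$, arises solely from the cross term $ab$; discarding it would leave only a linear blend of $C_{\bar P}$ and $C_{\hat P^{-1}_\delta}$.

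The main obstacle is not the algebra but justifying the asymptotic reductions: the constant $C_{G_\delta}$ enters $a$ and $b$, which in turn enter the optimality relation fixing $\rho$, so the two defining conditions are genuinely coupled. I would argue that this coupling is higher order---$\rho=1+\mathcal O(u)$, and the dropped $-1$ shifts and the $\zeta\mapsto 1$ replacements each perturb $a,b$ only at $\mathcal O(u)$---so the leading constant decouples and is fixed purely by the limiting constraint. I would also verify strict convexity of $\mathsf{ACV}^2 G_\delta(N-n_2,n_2)$ in $n_2$ over the relevant range, so that the stationarity condition genuinely identifies the infimum demanded by Definition \ref{def:optimal_sample_sizes}, mirroring the convexity already exploited for $\bar P$ and $\hat P^{-1}_\delta$.
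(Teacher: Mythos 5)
Your proposal is correct, and it reaches the stated constant through the same final computation as the paper: posit $n_i^\mathrm{opt}\sim C_{G_\delta}(1-\zeta)^{-2}$ with a common constant, substitute into $\mathsf{ACV}^2G_\delta=a+b+ab$, pass to the limit $\zeta\to 1^-$ to get $a\to 2/(\sigma_\mathscr D^2 C_{G_\delta})$ and $b\to 4/C_{G_\delta}$, and take the positive root of $\mathsf{acv}_0^2C_{G_\delta}^2-(4+2/\sigma_\mathscr D^2)C_{G_\delta}-8/\sigma_\mathscr D^2=0$. Where you genuinely differ is in how the common-constant claim is justified. The paper never writes a stationarity condition for $G_\delta$ itself: it argues structurally that since the optimal sizes of $\bar P$ (Lemma \ref{lem:barP_optimal_samples}) and of $\hat P^{-1}_\delta$ (Proposition \ref{prop:optimal_sample_properties}) are each asymptotically equal and scale like $(1-\zeta)^{-2}$, and $\mathsf{ACV}^2G_\delta$ is assembled from $\mathsf{ACV}^2\bar P$ and $\mathsf{ACV}^2\hat P^{-1}_\delta$, the same must hold for $G_\delta$; it then fixes $C_{G_\delta}$ exactly as you do. Your route---differentiating $(1+a)(1+b)$ at fixed $N$ and collapsing the first-order condition to $\rho^2=(A\zeta+B\zeta^2)/(A+B)$---is more self-contained and in fact stronger: it yields $\rho<1$ strictly for all $\zeta\in(0,1)$, which is precisely what licenses the one-sided limit $n_2^\mathrm{opt}/n_1^\mathrm{opt}\to 1^-$ that the paper only asserts. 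Your closing caveats (that the coupling between $C_{G_\delta}$ and the optimality relation must be shown to be higher order, and that convexity of $\mathsf{ACV}^2G_\delta(N-n_2,n_2)$ is needed so stationarity identifies the infimum demanded by Definition \ref{def:optimal_sample_sizes}) are gaps the paper's proof shares and glosses over, so flagging them is a point in your favor rather than a defect. One small note: the paper's displayed limiting equation reads $\cdots=\mathsf{acv}_0$, which is a typo for $\mathsf{acv}_0^2$; your quadratic is the correct one and is the one consistent with the stated $C_{G_\delta}$.
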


Comparing Theorem \ref{thm:G_delta_asymptotic_optimal_samples} with Proposition \ref{prop:optimal_sample_properties} $(iii)$ shows that at low-illumination the optimal sample sizes for $\hat P^{-1}_\delta$ are asymptotically proportional to the optimal sample sizes for $G_\delta$. The constant of proportionality, $C_{G_\delta}/C_{\hat P^{-1}_\delta}$, depends on the dark noise $\sigma_\mathscr D$ and as the dark noise increases
\begin{equation}
\frac{C_{G_\delta}}{C_{\hat P^{-1}_\delta}}\sim 1+\frac{1+\mathsf{acv}_0^2}{2\sigma_\mathscr D^2}+\mathcal O(\sigma_\mathscr D^{-4}),
\end{equation}
which further shows that this constant of proportionality approaches one with increasing dark noise. Bringing the observations following Theorems \ref{thm:ACVG_general_approx}-\ref{thm:G_delta_asymptotic_optimal_samples} together, we expect the optimal sample sizes of $\hat P^{-1}_\delta$ to serve as good approximations to those of $G_\delta$ at any illumination level given enough dark noise.

To get a better grasp on these observations we will define the metric
\begin{equation}
\mathcal E=\frac{\mathsf{ACV}\hat P^{-1}_\delta}{\mathsf{ACV}G_\delta}.
\end{equation}
and consider this metric as a function of the optimal sample sizes for $\hat P^{-1}_\delta$, that is,
\begin{equation}
\label{eq:Eopt_metric}
\mathcal E_\mathrm{opt}=\frac{\mathsf{acv}_0}{\mathsf{ACV}G_{\delta,\mathrm{opt}}}=\left(1+(\mathsf{ACV}^2\bar P_\mathrm{opt})(1+\mathsf{acv}_0^{-2})\right)^{-1/2}
\end{equation}
with
\begin{equation}
    \mathsf{ACV}^2\bar P_\mathrm{opt}=\frac{1}{\sigma_\mathscr D^2}\frac{\zeta}{(1-\zeta)^2}\left(\frac{1}{n_1^\mathrm{opt}}+\frac{\zeta}{n_2^\mathrm{opt}}\right).
\end{equation}
Both $\mathcal E$ and it's counterpart $\mathcal E_\mathrm{opt}$ are normalized in the sense that $0\leq\mathcal E_\mathrm{opt}\leq 1$. In particular, whenever $\mathcal E_\mathrm{opt}\approx 1$ we can conclude that the optimal sample sizes of $\hat P^{-1}_\delta$ are good approximations for those of $G_\delta$ with equality achieved when $\mathcal E_\mathrm{opt}=1$.

It is straightforward to show $\lim_{\zeta\to 0^+}\mathcal E_\mathrm{opt}=1$, which again indicates that the optimal sample sizes for $\hat P^{-1}_\delta$ are exactly equal to those for $G_\delta$ in the shot-noise-limit. Taking the limit now in the opposite direction, we see $\mathsf{ACV}^2\bar P_{\mathrm{opt}}\to\mathsf{acv}_0^2/(2\sigma_\mathscr D^2)$ as $\zeta\to 1^-$ and so
\begin{equation}
\overline{\mathcal E_\mathrm{opt}}\coloneqq\lim_{\zeta\to 1^-}\mathcal E_\mathrm{opt}=\left(1+\frac{1+\mathsf{acv}_0^2}{2\sigma_\mathscr D^2}\right)^{-1/2},
\end{equation}
which is nonzero. To get a sense of how close $\overline{\mathcal E_\mathrm{opt}}$ is to one recall from the discussion following Theorem \ref{thm:hatPinv_opt_ARB_ACV}, that we generally want to impose the restriction $\mathsf{acv}_0\in(0,0.1)$. Since $\overline{\mathcal E_\mathrm{opt}}$ is decreasing in $\mathsf{acv}_0$ we set $\mathsf{acv}_0=0.1$ giving the lower bound
\begin{equation}
\overline{\mathcal E_\mathrm{opt}}>\left(1+\frac{101}{200}\frac{1}{\sigma_\mathscr D^2}\right)^{-1/2}.
\end{equation}
Plotting this lower bound as a function of $\sigma_\mathscr D$ we find $\overline{\mathcal E_\mathrm{opt}}>0.99$ for $\sigma_\mathscr D\geq 5\,e\text{-}$ and so for sensors with dark noise greater than $5\,e\text{-}$, we can expect the optimal sample sizes of $\hat P^{-1}_\delta$ to give excellent approximations to those of $G_\delta$ at any level of illumination. In conclusion, if $\mathsf{acv}_0$ is small $(\mathsf{acv}_0<0.1)$ and the dark noise is sufficiently large $(\sigma_\mathscr D\geq 5\,e\text{-})$, the optimal sample sizes for $\hat P^{-1}_\delta$ serve as excellent approximations for those of the general conversion gain estimator $G$ at any illumination level.

We close this section with an analogous result to that of Theorem \ref{thm:hatPinv_opt_ARB_ACV}, which confirms our findings.
\begin{theorem}
\label{thm:G_opt_ARB_ACV}
Let $G_\mathrm{opt}$ denote the estimator $G$ as a function of the optimal sample sizes for $\hat P^{-1}_\delta$. Then as $\mathsf{acv}_0\to 0^+$
\begin{equation}
\begin{aligned}
\mathsf{ACV}_\mathcal PG_\mathrm{opt} &=\sqrt{1+a}\,\mathsf{acv}_0+\frac{6+a+b}{2\sqrt{1+a}}\mathsf{acv}_0^3+\mathcal O(\mathsf{acv}_0^5),\\
\mathsf{ARB}_\mathcal PG_\mathrm{opt} &=\mathsf{acv}_0^2+3\mathsf{acv}_0^4+\mathcal O(\mathsf{acv}_0^6),
\end{aligned}
\end{equation}
where $a=\frac{1}{\sigma_\mathscr D^2}\frac{\zeta}{1+\zeta}$ and $b=-\frac{1}{\sigma_\mathscr D^2}\frac{(1-\zeta)^2}{4(1+\zeta)}$.
\end{theorem}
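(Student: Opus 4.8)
The plan is to reduce everything to three ingredients already established: the pseudomoment product identity for $\mathsf{ACV}_\mathcal P^2 G$, the two expansions in Theorem~\ref{thm:hatPinv_opt_ARB_ACV}, and a single new expansion of $\mathsf{ACV}^2\bar P$ evaluated at the optimal sample sizes of Lemma~\ref{lem:hatPinv_delta_optimal_samples}. The bias statement is essentially free: because $G=\bar P\,\hat P^{-1}$ with $\bar P$ unbiased for $\mu_P$ and independent of $\hat P$, we already have $\mathsf{ARB}_\mathcal P G=\mathsf{ARB}_\mathcal P\hat P^{-1}$, so the second line of the theorem is literally the second line of Theorem~\ref{thm:hatPinv_opt_ARB_ACV} read at the same $(n_1^\mathrm{opt},n_2^\mathrm{opt})$. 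No further work is needed there.

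For the ACV, I would start from the product identity
\[
\mathsf{ACV}_\mathcal P^2 G_\mathrm{opt}=\mathsf{ACV}_\mathcal P^2\hat P^{-1}_\mathrm{opt}+(\mathsf{ACV}_\mathcal P^2\hat P^{-1}_\mathrm{opt})(\mathsf{ACV}^2\bar P_\mathrm{opt})+\mathsf{ACV}^2\bar P_\mathrm{opt}.
\]
Squaring the first line of Theorem~\ref{thm:hatPinv_opt_ARB_ACV} gives $\mathsf{ACV}_\mathcal P^2\hat P^{-1}_\mathrm{opt}=\mathsf{acv}_0^2+6\,\mathsf{acv}_0^4+\mathcal O(\mathsf{acv}_0^6)$, so the only genuinely new computation is the small-$\mathsf{acv}_0$ behaviour of $\mathsf{ACV}^2\bar P_\mathrm{opt}$.

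The key calculation is to substitute the optimal sample sizes from Lemma~\ref{lem:hatPinv_delta_optimal_samples} into (\ref{eq:ACV2_barP}). Writing $M=2(1+\zeta)/(1-\zeta)^2$ so that $n_1^\mathrm{opt}=M\mathsf{acv}_0^{-2}+1$ and $n_2^\mathrm{opt}=\zeta M\mathsf{acv}_0^{-2}+1$, I expand $1/n_1^\mathrm{opt}$ and $\zeta/n_2^\mathrm{opt}$ as geometric series in $\mathsf{acv}_0^2$, obtaining
\[
\frac{1}{n_1^\mathrm{opt}}+\frac{\zeta}{n_2^\mathrm{opt}}=\frac{2\,\mathsf{acv}_0^2}{M}-\frac{\mathsf{acv}_0^4}{M^2}\Bigl(1+\tfrac{1}{\zeta}\Bigr)+\mathcal O(\mathsf{acv}_0^6).
\]
Multiplying by the prefactor $\sigma_\mathscr D^{-2}\zeta/(1-\zeta)^2$ and simplifying with $1/M=(1-\zeta)^2/\bigl(2(1+\zeta)\bigr)$ collapses the leading term to $a\,\mathsf{acv}_0^2$ and the next to $b\,\mathsf{acv}_0^4$, yielding $\mathsf{ACV}^2\bar P_\mathrm{opt}=a\,\mathsf{acv}_0^2+b\,\mathsf{acv}_0^4+\mathcal O(\mathsf{acv}_0^6)$ with $a,b$ exactly as in the statement.

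Finally I substitute both expansions into the product identity; the cross term contributes only $a\,\mathsf{acv}_0^4$ at this order, so $\mathsf{ACV}_\mathcal P^2 G_\mathrm{opt}=(1+a)\mathsf{acv}_0^2+(6+a+b)\mathsf{acv}_0^4+\mathcal O(\mathsf{acv}_0^6)$, and factoring out $(1+a)\mathsf{acv}_0^2$ before applying $\sqrt{1+x}=1+x/2+\cdots$ produces the claimed $\mathsf{ACV}$ expansion. The only step demanding real care is the bookkeeping in $\mathsf{ACV}^2\bar P_\mathrm{opt}$ — in particular checking that the $1/\zeta$ in the $\zeta/n_2^\mathrm{opt}$ term combines with the overall $\zeta$ prefactor as $\zeta(1+\tfrac1\zeta)=1+\zeta$, so that $b$ remains finite and negative (and reduces to $-1/(4\sigma_\mathscr D^2)$) as $\zeta\to 0^+$; everything after that is routine series manipulation.
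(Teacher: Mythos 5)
Your proposal is correct and follows essentially the same route as the paper: the bias line via $\mathsf{ARB}_\mathcal PG=\mathsf{ARB}_\mathcal P\hat P^{-1}$ and Theorem~\ref{thm:hatPinv_opt_ARB_ACV}, then the product identity combined with the squared expansion $\mathsf{acv}_0^2+6\mathsf{acv}_0^4+\mathcal O(\mathsf{acv}_0^6)$ and the expansion $\mathsf{ACV}^2\bar P_\mathrm{opt}=a\,\mathsf{acv}_0^2+b\,\mathsf{acv}_0^4+\mathcal O(\mathsf{acv}_0^6)$, followed by the square-root expansion. Your explicit derivation of the $a$ and $b$ coefficients from Lemma~\ref{lem:hatPinv_delta_optimal_samples} and Eq.~(\ref{eq:ACV2_barP}) actually supplies a computation the paper's proof merely asserts (and states with misprinted powers of $\mathsf{acv}_0$), so it is a welcome addition rather than a deviation.
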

Notice that in particular, if $\sigma_\mathscr D$ is large, and/or $\zeta$ is small, the main term in the expansion for $\mathsf{ACV}_\mathcal PG_\mathrm{opt}$ is close to $\mathsf{acv}_0$ showing that the optimal sample sizes for $\hat P^{-1}_\delta$ are good approximations for those of $G$ under these conditions.


\section{Design and control of experiment for per-pixel conversion gain estimation}
\label{sec:DOE_COE}

In this section, we demonstrate how the derived expressions can be used in the Design \& Control of Experiment ({\sc doe} \& {\sc coe}) for per-pixel conversion gain estimation.  For this example we will use the ON Semiconductor KAI-04070 monochrome interline transfer {\sc ccd} sensor \cite{ccd_sensor}. As we will demonstrate, the entire experimental process for per-pixel conversion gain estimation centers around the optimal sample sizes for $\hat P^{-1}_\delta$ in Lemma \ref{lem:hatPinv_delta_optimal_samples} and the metric $\mathcal E_\mathrm{opt}$ given in (\ref{eq:Eopt_metric}).

A full Monte Carlo implementation of the {\sc doe} and {\sc coe} algorithms presented in this section can be found in Code 1 \cite{supp_code1} and Code 2 \cite{supp_code2}. Code 1 (\texttt{DOE\_and\_COE.m}) is the primary script and requires {\sc matlab}'s \emph{Statistics and Machine Learning} toolbox to run.  Since Code 1 is meant to guide the reader in understanding the steps involved in {\sc doe} and {\sc coe} it is advised to run it one section at a time and observe the results of each section.


\subsection{Experimental Setup}
\label{subsec:experimental setup}

Figure \ref{fig:experimental_setup} shows a schematic diagram of the experimental setup used for performing per-pixel conversion gain estimation. The experimental setup consisted of a $650\,\mathrm{nm}$ Superluminescent Light Emitting Diode ({\sc sled}) passed into an integrating sphere with the Sensor Under Test ({\sc sut}) placed in the plane of the spheres output port ($f/0$ geometry) where uniformity is highest \cite{LABSPHERE_2017}. To facilitate control of the illumination level, a Variable Optical Attenuator ({\sc voa}) was introduced between the {\sc sled} and integrating sphere. The sensor was configured at its full bit-depth of $14$-bits to minimize quantization error and image data was read off the sensor at a $512\times 512\,\mathrm{px}$ resolution using a single readout register operating at $40\,\mathrm{mhz}$. By operating the sensor in a single-tap mode like this, the gain of each pixel will be the same and so any excursions in the measured gain of each pixel will be, in theory, completely due to sampling error.  This high level of uniformity will allow us to see if the proposed algorithms are able to successfully measure the gain to the desired relative uncertainty, verify our noise model, and compare experimental results to theoretical predictions.

In order to capture imagery under both dark and illuminated conditions, a Motorized Mirror ({\sc mm}) was placed next to the path of the {\sc sled} beam. Moving the mirror into the beam path redirected the beam away from the integrating sphere and into a Beam Dump ({\sc bd}); thus, providing a dark environment for the sensor.
\begin{figure}[htb]
\centering
\includegraphics[scale=0.25]{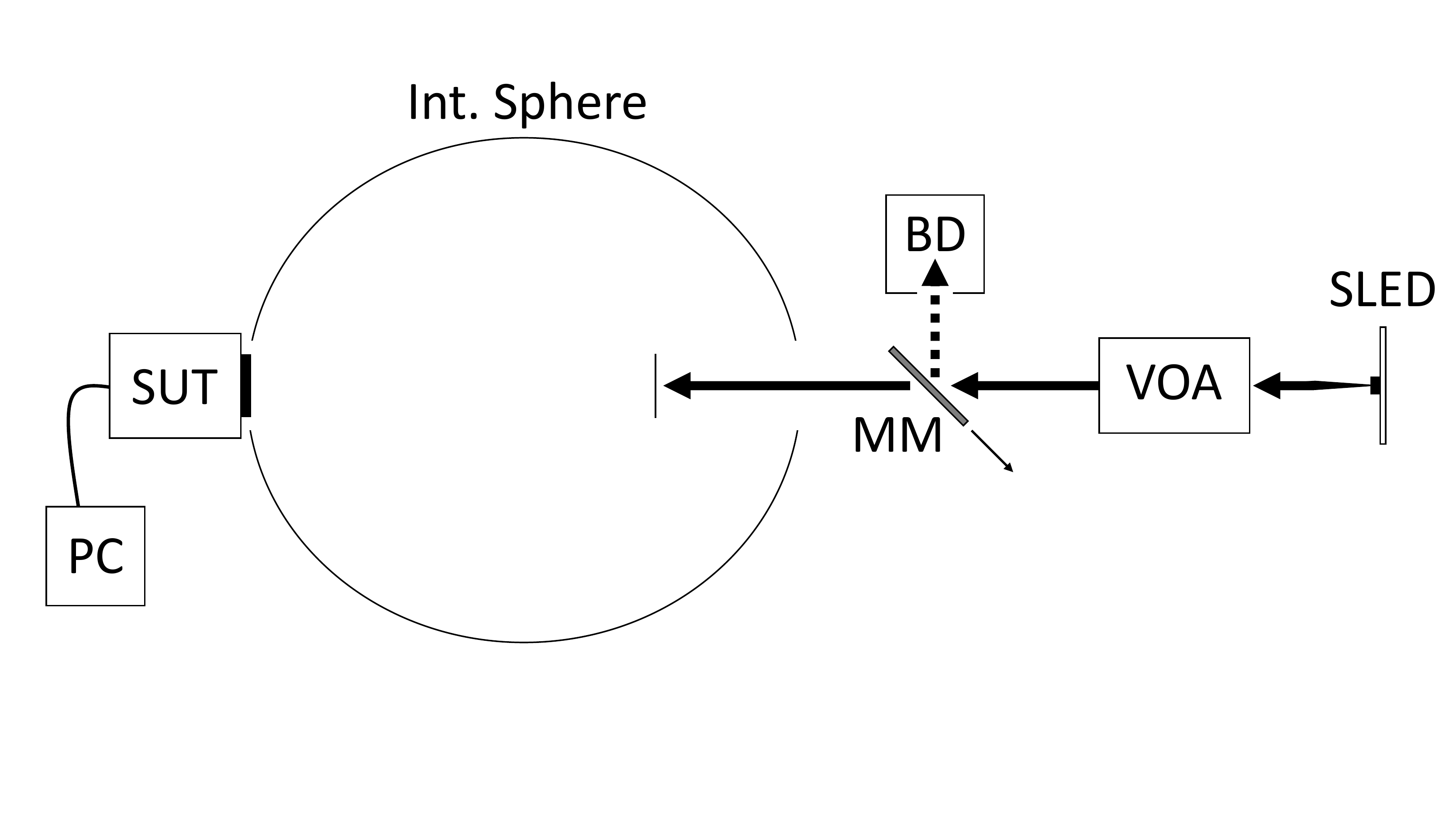}
\caption{Schematic diagram of experimental setup for per-pixel conversion gain estimation.}
\label{fig:experimental_setup}
\end{figure}


\subsection{Design of experiment}
\label{subsec:doe}

{\sc doe} for per-pixel conversion gain estimation begins with choosing a suitable value for the desired relative uncertainty $\mathsf{acv}_0$. As a rough rule of thumb, if we are operating under the condition $\mathcal E_\mathrm{opt}\approx 1$, then for small $\mathsf{acv}_0$ we may approximate $G_\mathrm{opt}\sim\mathcal N(g,(g\times\mathsf{acv}_0)^2)$ so that $\mathsf P(G_\mathrm{opt}\in g(1\pm\mathsf{acv}_0))=0.683$. This amounts to the optimal estimate of $g$ being within $\mathsf{acv}_0\times 100\%$ of its exact value $68.3\%$ of the time. For this particular experiment it was decided to use $\mathsf{acv}_0=0.05$ as this represents a typical value one might choose.

Next, we want to determine an appropriate illumination level for measuring the conversion gain by selecting a value of $\zeta$ where: (1) the total number of required observations is attainable and (2) our approximate optimal sample sizes are valid.  To determine when both conditions are satisfied, we will create an $\mathcal E$-$N$ plot, which consists of plotting the functions $\mathcal E_\mathrm{opt}$ and $N^\mathrm{opt}=n_1^\mathrm{opt}+n_2^\mathrm{opt}$ as a function of $\zeta$.  To determine an illumination level, we select a value of $\zeta$ where $N^\mathrm{opt}$ is small enough and $\mathcal E_{\mathrm{opt}}\approx 1$. However, because $\mathcal E_\mathrm{opt}$ contains the unknown value of the dark noise, we will need to provide an estimate of $\sigma_\mathscr D$ for this plot, which can be accomplished through a preliminary measurement, a vendor specification sheet, or an educated guess/lower bound.

In this example we will take the route of estimating a global lower bound for the dark noise. For this particular sensor we expect $g\geq 1$ and since $\sigma_\mathscr D=\sigma_D\times g$ it follows that $\sigma_D$ provides a lower bound on $\sigma_\mathscr D$. Since $\mathcal E_\mathrm{opt}$ is an increasing function of $\sigma_\mathscr D$, it follows that evaluating it at $\sigma_D$ provides a lower bound for it's exact value. To obtain a global estimate of $\sigma_D^2$ we capture two dark frames $\mathbf Y_1$ and $\mathbf Y_2$ and then compute half the sample variance of the difference-frame $\Delta \mathbf Y=\mathbf Y_1-\mathbf Y_2$. Note the use of bold symbols to denote $\mathbf Y_k$ as a two-dimensional array so that $\Delta \mathbf Y$ is the pixel-wise difference of the two dark frames $\mathbf Y_1$ and $\mathbf Y_2$. For the sensor under test we found
\begin{equation}
\label{eq:global_est_of_sD_elec}
\hat\sigma_D^2=\frac{1}{2}\mathrm{var}(\Delta\mathbf Y)=39.94\,\mathrm{DN}^2,
\end{equation}
where
\begin{equation}
\mathrm{var}(\Delta\mathbf Y)=\frac{1}{512^2-1}\sum_{i=1}^{512}\sum_{j=1}^{512}(\Delta \mathbf Y_{ij}-\overline{\Delta \mathbf Y})^2.
\end{equation}

Figure \ref{fig:DOE_plots} plots
\begin{multline}
\hat{\mathcal E}_\mathrm{opt}(\zeta)=\\
\left(1+\frac{1+\mathsf{acv}_0^{-2}}{\hat\sigma_D^2}\frac{\zeta}{(1-\zeta)^2}\left(\frac{1}{n_1^\mathrm{opt}(\zeta)}+\frac{\zeta}{n_2^\mathrm{opt}(\zeta)}\right)\right)^{-1}
\end{multline}
along with $N^\mathrm{opt}(\zeta)$ for our choice of $\mathsf{acv}_0=0.05$ and estimate $\hat\sigma_D$.
\begin{figure}[htb]
\centering
\includegraphics[scale=1]{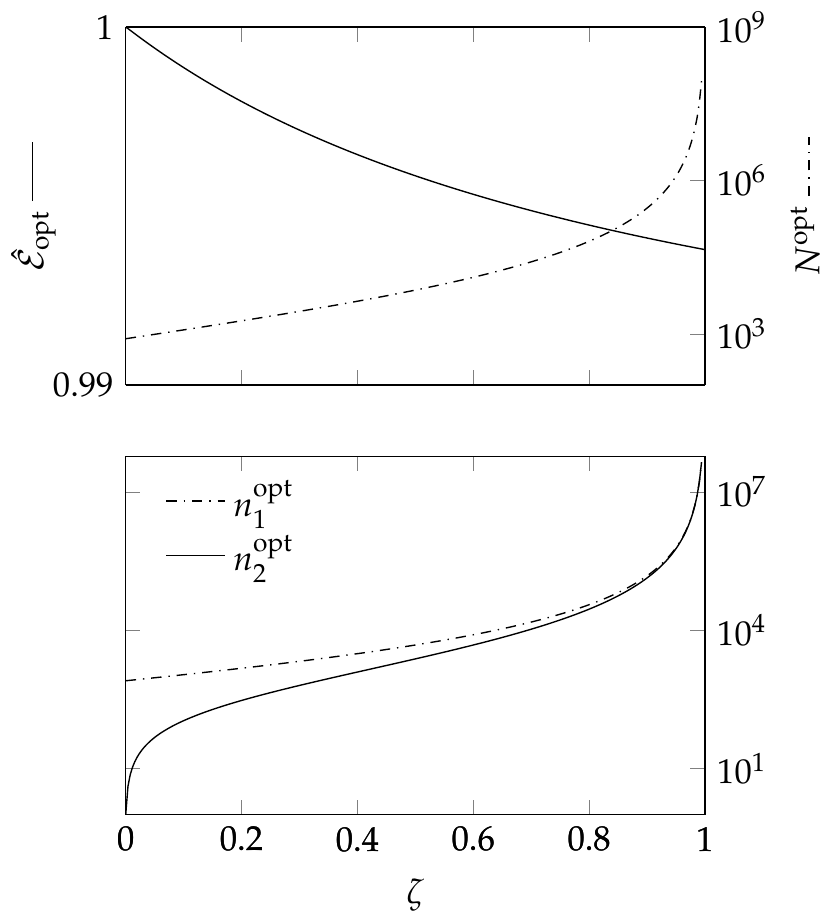}
\caption{$\mathcal E$-$N$ plot (top) with optimal sample sizes (bottom) for $\mathsf{acv}_0=0.05$ versus $\zeta$.}
\label{fig:DOE_plots}
\end{figure}
Due to the sufficiently large value of $\hat\sigma_D$ we see that $\hat{\mathcal E}_\mathrm{opt}$ is near unity for virtually any illumination level so that the requirement $\mathcal E_\mathrm{opt}\approx 1$ will not restrict what illumination levels we can choose for the experiment. To select an appropriate illumination level we first note that this sensor can record images at $\approx 5\,\mathrm{fps}$ for the chosen readout rate of $40\,\mathrm{mhz}$. Looking back at Figure \ref{fig:DOE_plots} we observe that the illumination level corresponding to $\zeta= 0.4$ is paired with an optimal total sample size of $N^\mathrm{opt}\approx 4300$ and approximation quality metric $\mathcal E_\mathrm{opt}\approx 0.993$. At a recording rate of $5\,\mathrm{fps}$ this number of images will take $\approx 15\,\mathrm{min.}$ to capture, which is short enough to avoid any significant drift in the sensor or source.

Now equipped with a desired value for $\zeta$, we guessed the required illumination level by adjusting the variable optical attenuator until the illumination level at the sensor plane resulted in a mean pixel output of about $1\%$ of the sensor's dynamic range. Using the same process to estimate the dark noise, two illuminated frames $\mathbf X_1$ and $\mathbf X_2$ were captured and their difference $\Delta\mathbf X=\mathbf X_1-\mathbf X_2$ was used to obtain the estimate
\begin{equation}
\label{eq:global_est_of_sPD_elec}
\hat\sigma_{P+D}^2=\frac{1}{2}\mathrm{var}(\Delta\mathbf X)=112.89\,\mathrm{DN}^2,
\end{equation}
which lead to a global estimate of $\zeta$ equal to
\begin{equation}
\label{eq:pre_experiment_zeta_est}
\hat\zeta=\frac{\hat\sigma_D^2}{\hat\sigma_{P+D}^2}=0.354.
\end{equation}
For our purposes, this illumination level was sufficiently close to the target $\zeta=0.4$ and corresponded to $\hat{\mathcal E}_\mathrm{opt}=0.996$ and $N^\mathrm{opt}=3521$ images, which needs only $\approx 12\,\mathrm{min}.$ to capture. If this $\zeta$-value was not appropriate we could simply re-adjust the illumination level, capture another two illuminated frames, and re-estimate $\zeta$.  We can repeat this procedure until we have found an illumination level corresponding to a suitable $\zeta$-value.

Before moving on to data capture, its important to understand what (\ref{eq:pre_experiment_zeta_est}) is an estimate of. Recall that $\mathbf Y_{ijk}\sim\mathcal N(\mu_D,\sigma_D^2)$ so that we have for the difference frame $\mathbf Y_{ij1}-\mathbf Y_{ij2}=\Delta\mathbf Y_{ij}\sim\mathcal N(0,2\sigma_D^2)$. In general $\sigma_D^2$ will vary from pixel-to-pixel (the $ij$-dimension) so that we may treat it as a random variable and model it according to some probability distribution $\sigma_D^2\sim F_{\sigma_D^2}$. By the law of total variance one finds
\begin{equation}
    \mathsf{Var}(\Delta\mathbf Y_{ij})=\mathsf{Var}(\mathsf E(\Delta\mathbf Y_{ij}|\sigma_D^2))+\mathsf E(\mathsf {Var}(\Delta\mathbf Y_{ij}|\sigma_D^2))=2\,\mathsf E(\sigma_D^2),
\end{equation}
which shows that the estimator (\ref{eq:global_est_of_sD_elec}) gives an unbiased estimate of the average value of $\sigma_D^2$ across the sensor array and likewise for the estimator of $\sigma_{P+D}^2$ in (\ref{eq:global_est_of_sPD_elec}). Thus, $\hat\zeta$ is a ratio of unbiased estimates for the average noise values across the sensor array; providing a useful global estimate of $\zeta$.


\subsection{Control of experiment}
\label{subsec:coe}

With the illumination level set, the experiment was ready to commence. The algorithm for data capture is presented in Algorithm \ref{alg:coe}. This algorithm utilizes Welford's online algorithm in {\sc UpdateStats}() to iteratively update the master frames $\bar{\mathbf X}$, $\bar{\mathbf Y}$, $\hat{\mathbf X}$, and $\hat{\mathbf Y}$, which contain sample means and variances for each pixel in both dark and illuminated conditions. In each iteration of the algorithm a batch of illuminated frames and another batch of dark frames are captured and used to update the master frames before recalculating a global estimate of $\zeta$, which is then in turn used to update the next batch sizes.  The initial batch sizes are $\mathtt{batch1}=\lceil n_1^\mathrm{opt}(0,\mathsf{acv}_0)\rceil$ and $\mathtt{batch2}=2$, which are the minimal possible number of each frame type needed. The batch sizes are updated to capture $m\times 100\%$ ($m\in(0,1]$) of the remaining difference between the current sample sizes and their estimates.  The parameter $m$ controls how aggressive the algorithm is and ultimately how many times the light source needs to be turned on and off. Large values of $m$ mean the algorithm will iterate less times (putting more confidence in the estimated sample sizes) while smaller values of $m$ result in more iterations (less confidence in the estimates). In the limit $m\to 0$, the algorithm iterates every time a dark and illuminated frame are captured. The algorithm terminates when both batch sizes are nonpositive indicating that the current sample sizes meet or exceed their respective estimates. This is then followed by the per-pixel calculation of $g$, denoted $\mathbf G$, which we shall call the $g$-map.

While the form of $\hat\zeta$ in Algorithm \ref{alg:coe} has a much different form than that of (\ref{eq:pre_experiment_zeta_est}), we can show it still is estimating the same quantity. To see why note that $\hat{\mathbf Y}_{ij}\sim\mathcal G(\alpha_2,\beta_2)$ with $\sigma_D^2\sim F_{\sigma_D^2}$. By the law of total expectation
\begin{equation}
\mathsf E(\hat{\mathbf Y}_{ij})=\mathsf E(\mathsf E(\hat{\mathbf Y}_{ij}|\sigma_D^2))=\mathsf E(\sigma_D^2),
\end{equation}
which shows $\sum\hat{\mathbf Y}_{ij}$ is an unbiased estimator for $I\cdot J\cdot\mathsf E(\sigma_D^2)$ with $I$ and $J$ being the vertical and horizontal resolution of the sensor in units of pixels, respectively. Likewise, $\sum\hat{\mathbf X}_{ij}$ is an unbiased estimator for $I\cdot J\cdot \mathsf E(\sigma_{P+D}^2)$ so that $\hat\zeta=\sum\hat{\mathbf Y}_{ij}/\sum\hat{\mathbf X}_{ij}$ is equivalent to a ratio of unbiased estimates for $\mathsf E(\sigma_D^2)$ and $\mathsf E(\sigma_{P+D}^2)$ just as (\ref{eq:pre_experiment_zeta_est}) is.

\begin{algorithm}[htb]
\caption{{\sc coe} algorithm.}\label{alg:coe}
\begin{algorithmic}[1]
\Procedure{OptimalCOE}{$\mathsf{acv}_0$,$m$}
\State $n_1=0$; $n_2=0$; $\hat\zeta=0;$
\State $\mathtt{batch1}=\lceil n_1^\mathrm{opt}(\hat\zeta,\mathsf{acv}_0)\rceil$;
\State $\mathtt{batch2}=2$; \label{line:n2_init_batch}
\State
\While{$\mathtt{batch1}>0\lor\mathtt{batch2}>0$}
\If{$\mathtt{batch1}>0$}
\State Turn on light source.
\State $\mathtt{NewData}=$ Capture $\mathtt{batch1}$ $\mathbf X$-frame(s); \label{line:n1_batch_update}
\State $n_1=n_1+\mathtt{batch1}$;
\State $[\bar{\mathbf X},\hat{\mathbf X}]=\text{\sc{UpdateStats}}(\bar{\mathbf X},\hat{\mathbf X},\mathtt{NewData})$;
\EndIf
\If{$\mathtt{batch2}>0$}
\State Turn off light source.
\State $\mathtt{NewData}=$ Capture $\mathtt{batch2}$ $\mathbf Y$-frame(s); \label{line:n2_batch_update}
\State $n_2=n_2+\mathtt{batch2}$;
\State $[\bar{\mathbf Y},\hat{\mathbf Y}]=\text{\sc{UpdateStats}}(\bar{\mathbf Y},\hat{\mathbf Y},\mathtt{NewData})$;
\EndIf
\State $\hat\zeta=\sum\hat{\mathbf Y}_{ij}/\sum\hat{\mathbf X}_{ij}$;
\State $\mathtt{batch1}=\lceil m\times(n_1^\mathrm{opt}(\hat\zeta,\mathsf{acv}_0)-n_1)\rceil$;
\State $\mathtt{batch2}=\lceil m\times(n_2^\mathrm{opt}(\hat\zeta,\mathsf{acv}_0)-n_2)\rceil$;
\EndWhile
\State
\State $\mathbf G=(\bar{\mathbf X}-\bar{\mathbf Y})/(\hat{\mathbf X}-\hat{\mathbf Y})$;
\Comment{Calculation is per-pixel.}
\State \textbf{return} $\mathbf G$
\EndProcedure
\end{algorithmic}
\end{algorithm}


\subsection{Experimental Data}
\label{subsec:experimental_data}

Algorithm \ref{alg:coe} was executed on the chosen {\sc ccd} for the specified uncertainty $\mathsf{acv}_0=0.05$ and multiplier $m=0.8$, which halted after capturing $n_1=2602$ illuminated frames and $n_2=921$ dark frames. Due to the large value of $m$, the mirror only needed to be moved eleven times throughout the entire {\sc coe} procedure. Figure \ref{fig:G_map} presents the final $g$-map created from Algorithm \ref{alg:coe} along with a histogram of the $g$-map values. Upon inspection of the $g$-map, there is no evidence of structure or nonuniformities as was expected.  Because the excursions in the $g$-map are almost entirely due to random sampling error, as opposed to nonuniformity in the sensor or source, and the sensor noise obeys the assumed noise model, the estimated value of $g$ for each pixel represents an i.i.d.~observation from the {\sc cif} distribution \cite{Hendrickson:17}. Given the large values of $n_1$ and $n_2$, the fitted {\sc cif} distribution for this example was approximated by Hinkley's normal ratio distribution \cite{hinkley_1969}
\begin{equation}
    f_G(g)\sim\frac{e^{-c/2}}{\sigma_{\bar P}\sigma_{\hat P} a^2(g)}\left(\frac{b(g)e^{b^2(g)/2}}{\sqrt{2\pi}}(2\Phi(b(g))-1)+\frac{1}{\pi}\right)
\end{equation}
where $a(g)=(g^2/\sigma_{\bar P}^2+1/\sigma_{\hat P}^2)^{1/2}$, $b(g)=(\mu_{\bar P} g/\sigma_{\bar P}^2+\mu_{\hat P}/\sigma_{\hat P}^2)/a(g)$, and $c=\mu_{\bar P}^2/\sigma_{\bar P}^2+\mu_{\hat P}^2/\sigma_{\hat P}^2$. The parameters $\mu_{\bar P}$, $\sigma_{\bar P}$, $\mu_{\hat P}$, and $\sigma_{\hat P}$ are easily estimated from sample means and sample standard deviations of the $\bar{\mathbf P}=\bar{\mathbf X}-\bar{\mathbf Y}$ and $\hat{\mathbf P}=\hat{\mathbf X}-\hat{\mathbf Y}$ master frames.

Given that the fitted distribution is that of a normal ratio, the pseudomoments of the fitted distribution are exactly described by (\ref{eq:higher_order_pseudo_moment_of_G_aprx}). Table \ref{tab:pseudomoment_comparison} presents the exact values of the pseudomoments for the fitted distribution and compares them to sample moments of the $g$-map.  Upon inspection, we see the theoretical pseudomoments show a high level of agreement with the sample moments and thus demonstrate that the pseudomoments serve as a useful way to characterize moments of actual sensor data.
\begin{table}[htb]
\centering
\caption{\bf Comparison of $g$-map sample moments to fitted distribution pseudomoments.}
\begin{tabular}{cccc}
\hline
&fit &sample &error ($\%$) \\
\hline
$\mathsf E_\mathcal PG$ &$2.1971$ &$2.1972$ &$4.3508\times 10^{-3}$ \\
$\sqrt{\mathsf{Var}_\mathcal PG}$ &$0.11069$ &$0.11122$ &$0.47739$ \\
\hline
\end{tabular}
 \label{tab:pseudomoment_comparison}
\end{table}

\begin{figure}[htb]
\centering
\includegraphics[scale=1]{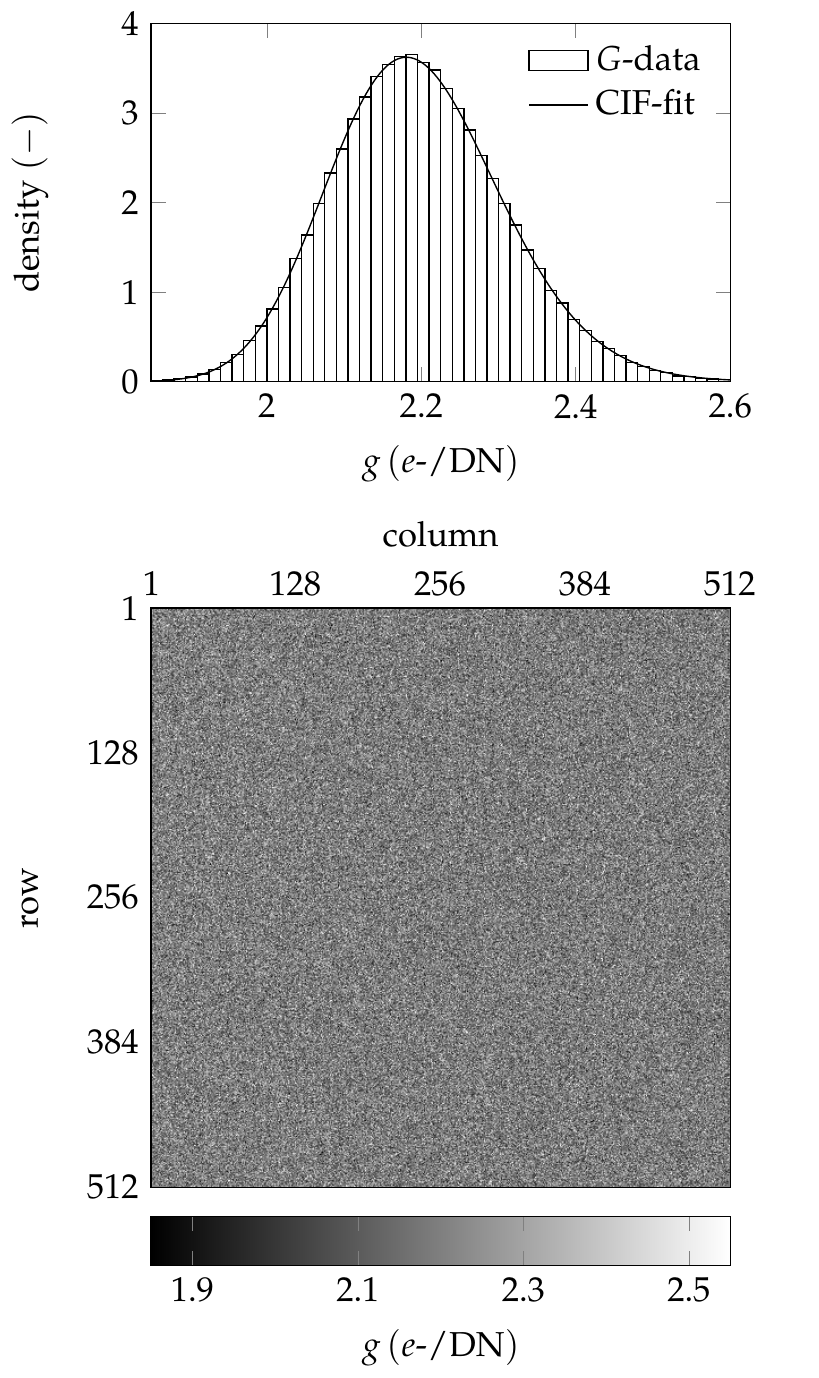}
\caption{Histogram of $g$-map values fit with the {\sc cif} distribution (top) and $g$-map (bottom).}
\label{fig:G_map}
\end{figure}

Recall that the {\sc coe} algorithm was run with a design parameter of $\mathsf{acv}_0=0.05$; comparing this to the measurement, we find the sample absolute coefficient of variation of the $g$-map to be
\begin{equation}
\mathrm{acv}(\mathbf G)=0.05059.
\end{equation}
Here, $\mathrm{acv}(\mathbf G)=\sqrt{\mathrm{var}(\mathbf G)}/|\mathrm{mean}(\mathbf G)|$ with $\mathrm{mean}(\mathbf G)$ and $\mathrm{var}(\mathbf G)$ denoting the sample mean and variance of the two-dimensional $g$-map, respectively. We note that $\mathrm{acv}(\mathbf G)>\mathsf{acv}_0$. The discrepancy between $\mathrm{acv}(\mathbf G)$ and $\mathsf{acv}_0$ is due in part to variance in the {\sc coe} procedure and other factors like having to estimate the optimal sample sizes. But even if the algorithm was deterministic and we knew the optimal sample sizes exactly, Theorem \ref{thm:G_opt_ARB_ACV} tells us that the absolute coefficient of variation for the $g$-map should have a positive bias away from the desired value of $\mathsf{acv}_0$.

To study this effect more, a Monte Carlo experiment was setup to replicate the sensor data presented above. The only parameters we need to estimate for the experiment are $g$, $\mu_\mathscr D$, $\sigma_\mathscr D$, and $\mu_{e\text{-}}$. To obtain a good estimate for $g$ we use the definition of relative bias $\mathsf{RB}_\mathcal PG=(\mathsf E_\mathcal PG-g)/g$ to write the first moment of $G$ as
\begin{equation}
    \mathsf E_\mathcal PG=g(1+\mathsf{RB}_\mathcal PG).
\end{equation}
In the context of conversion gain estimation we expect the relative bias to be positive, i.e. $\mathsf{RB}_\mathcal PG=\mathsf{ARB}_\mathcal PG$ and since the $g$-map, $\mathbf G$, was computed using the optimal sample sizes we have according to Theorem \ref{thm:G_opt_ARB_ACV}: $\mathsf{ARB}_\mathcal PG_\mathrm{opt}\sim\mathsf{acv}_0^2+3\mathsf{acv}_0^4$. This leads to the bias corrected estimate
\begin{equation}
    \hat g=\mathrm{mean}(\mathbf G)/(1+\mathsf{acv}_0^2+3\mathsf{acv}_0^4)
\end{equation}
with $\mathsf{acv}_0=0.05$. Table \ref{tab:MC_param_estimates} presents the estimate for $g$ along with the other estimates needed for the experiment. We note that these parameter estimates are used as the simulation parameters in Code 1 \cite{supp_code1}.
\begin{table}[htb]
\centering
\caption{\bf Parameter estimates for Monte Carlo experiment.}
\begin{tabular}{ccc}
\hline
parameter &estimate &value \\
\hline
$\hat g$ &$\mathrm{mean}(\mathbf G)/(1+\mathsf{acv}_0^2+3\mathsf{acv}_0^4)$ &$2.1917$ \\
$\hat\mu_\mathscr D$ &$\mathrm{mean}(\bar{\mathbf Y})\times \hat g$ &$92.858$ \\
$\hat\sigma_\mathscr D$ &$\sqrt{\mathrm{mean}(\hat{\mathbf Y})}\times \hat g$ &$13.853$ \\
$\hat\mu_{e\text{-}}$ &$\mathrm{mean}(\bar{\mathbf X}-\bar{\mathbf Y})\times \hat g$ &$350.03$ \\
\hline
\end{tabular}
 \label{tab:MC_param_estimates}
\end{table}

With the parameter estimates in Table \ref{tab:MC_param_estimates}, Algorithm \ref{alg:coe} was run a total of ten times using the design parameters $\mathsf{acv}_0=0.05$ and $m=0.8$. To simulate the sensor data, $512\times 512\mathrm{px}$ dark images, $\mathbf Y$, and illuminated images, $\mathbf X$, were generated according to the model in Section \ref{sec:sensor_noise_model} so that they contained elements of the form
\begin{gather}
    \mathbf Y_{ij}=\lceil\mathscr D/\hat g\rfloor\\
    \mathbf X_{ij}=\lceil(\mathscr P+\mathscr D)/\hat g\rfloor,
\end{gather}
with $\mathscr P\sim\mathcal P(\hat\mu_{e\text{-}})$ and $\mathscr D\sim\mathcal N(\hat\mu_\mathscr D,\hat\sigma_\mathscr D^2)$. Table \ref{tab:MC_results} presents the results of the Monte Carlo experiment. We see that the sample sizes resulting from the simulated {\sc coe} algorithm were very close to those from our actual experiment and have very little variance. This indicates that: (1) our assumed noise model is effective and (2) the difference between $\mathrm{acv}(\mathbf G)$ and $\mathsf{acv}_0$ in our original experiment is unlikely to be due to uncertainty in the optimal sample size estimates. Furthermore, we see that the absolute coefficient of variation for the $g$-map created by the algorithm shows a distinct positive bias away the desired value $\mathsf{acv}_0$, also with small variance. Such results are indicative that the difference between $\mathrm{acv}(\mathbf G)$ and $\mathsf{acv}_0$ in our original experiment is a result of approximating optimal sample sizes for $G$ with those of $\hat P^{-1}_\delta$.

\begin{table}[htb]
\centering
\caption{\bf Results of Monte Carlo experiment for ten runs.}
\begin{tabular}{cccc}
\hline
&$\mathrm{mean}(\cdot)$ &$\mathrm{var}(\cdot)$ &target \\
\hline
$n_1$ &$2606.9$ &$0.1$ &$2602$ \\
$n_2$ &$923.9$ &$0.1$ &$921$ \\
$\mathrm{acv}(\mathbf G)$ &$0.050279$ &$5.0328\times 10^{-9}$ &$0.050409$ (Thm. \ref{thm:G_opt_ARB_ACV}) \\
\hline
\end{tabular}
 \label{tab:MC_results}
\end{table}


\section{Application: Per-Pixel Read Noise Estimation}
\label{sec:R_map}

With a method for estimating per-pixel conversion gain, we can estimate downstream parameters such as the read noise $\sigma_\mathscr R$ on a per-pixel basis. Recall that the random variable $\mathscr D$ represents the dark noise of the sensor for some nonzero integration time. As such, the noise represented by $\mathscr D$ contains the combined noise from read noise and dark current shot noise. To isolate the sensor read noise component of $\mathscr D$, given by $\mathscr R\sim\mathcal N(\mu_\mathscr R,\sigma_\mathscr R^2)$, we simply set the sensor integration time to $t_\mathrm{exp}=0\,\mathrm{sec}$ (or the shortest allowable integration time), which eliminates dark current. Again by linearity of the transfer function we find for the sensor output at zero illumination for a zero second integration time to be $R=\mathcal T(\mathscr R)$ with
\begin{equation}
    \mu_R\coloneqq\mathsf ER=\mu_\mathscr R/g
\end{equation}
and
\begin{equation}
    \sigma_R^2\coloneqq\mathsf{Var}R=\sigma_\mathscr R^2/g^2.
\end{equation}
Upon inspection of the expression for $\sigma_R^2$ we find an equation for the read noise as 
\begin{equation}
    \sigma_\mathscr R=\sigma_R\times g.
\end{equation}
We already have an estimator for $g$ so all that is needed to estimate $\sigma_\mathscr R$ is an estimator for $\sigma_R$.

To estimate $\sigma_R$ we let $\{Z_1,\dots,Z_{n_3}\}$ denote a sequence of $n_3$ i.i.d.~observations of a pixel captured in the dark over a zero second integration time. Under the assumed normal model, $Z_k\sim\mathcal N(\mu_R,\sigma_R^2)$; thus we can obtain an unbiased estimate of $\sigma_R^2$ via the sample variance
\begin{equation}
    \hat Z=\frac{1}{n_3-1}\sum_{k=1}^{n_3}(Z_k-\bar Z)^2.
\end{equation}
It follows that we can estimate $\sigma_R$ via the sample standard deviation $\breve Z=\sqrt{\hat Z}$ although this estimate will no longer be unbiased. It's easy to show that $\hat Z\sim\mathcal G(\alpha_3,\beta_3)$ with $\alpha_3=(n_3-1)/2$ and $\beta_3=\alpha_3/\sigma_R^2$ so that using the transformation $\breve Z=\sqrt{\hat Z}$ we find for the density of $\breve Z$
\begin{equation}
f_{\breve Z}(z)=\frac{2\beta^\alpha}{\Gamma(\alpha)}z^{2\alpha-1}e^{-\beta z^2},
\end{equation}
which is the Nakagami distribution $\breve Z\sim{\mathcal Na}(\alpha_3,\beta_3)$. One finds for the moments of $\breve Z$
\begin{equation}
\mathsf E\breve Z^n=\frac{(\alpha_3)_{n/2}}{\alpha_3^{n/2}}\sigma_R^n,
\end{equation}
with $(s)_n\coloneqq\Gamma(s+n)/\Gamma(s)$ again denoting the Pochhammer symbol. Thus, an unbiased estimator for $\sigma_R$
can be given by
\begin{equation}
    \breve Z^\ast=\frac{\sqrt{\alpha_3}}{(\alpha_3)_{1/2}}\sqrt{\hat Z}.
\end{equation}
The read noise $\sigma_\mathscr R$ of our pixel can therefore be estimated via
\begin{equation}
    \breve{\mathscr R}=\breve Z^\ast\times G,
\end{equation}
with the estimator $G$ given by (\ref{eq:sub_shot_noise_limited_estimator}). Since $\breve Z^\ast$ is computed from a the sample $\mathbf Z$, which is independent from the samples $\mathbf X$ and $\mathbf Y$ used to calculate $G$, we further find for the density of $\breve R$ 
\begin{equation}
\label{eq:rn_distribution}
f_{\breve R}(r)=\int_0^\infty f_{\breve Z^\ast}(t)f_G(r/t)\frac{\mathrm dt}{t},
\end{equation}
with $f_G$ again being the {\sc cif} distribution. We shall simply refer to $f_{\breve{\mathscr R}}$ as the Read Noise ({\sc rn}) distribution. Likewise, we have for the pseudomoments of $\breve{\mathscr R}$
\begin{equation}
\mathsf E_\mathcal P\breve{\mathscr R}^n=\frac{(\alpha_3)_{n/2}}{(\alpha_3)_{1/2}^n}\sigma_R^n\,\mathsf E_\mathcal PG^n.
\end{equation}

To experimentally measure the per-pixel read noise of our sensor, we again employed an iterative algorithm.  In each iteration of this algorithm a master $\breve{\mathbf Z}^\ast$-frame was updated with a new $\mathbf Z$-frame using Welford's algorithm, which was then multiplied, per-pixel, by the $g$-map estimated in the previous section to produce the $\sigma_{\mathscr R}$-map.  The algorithm was stopped when the sample absolute coefficient of variation for the $\sigma_{\mathscr R}$-map satisfied $\mathrm{acv}(\breve{\mathscr R}_\mathrm{map})\leq 1.05\,\mathrm{acv}(\mathbf G)$. This algorithm halted after $n_3=2291$ $\mathbf Z$-frames were captured.

Figure \ref{fig:R_map} presents the $\sigma_{\mathscr R}$-map generated from this procedure along with its histogram fit the with {\sc rn}-distribution of (\ref{eq:rn_distribution}). Unlike the $g$-map, the $\sigma_{\mathscr R}$-map does show some column-wise nonuniformities, which are linked to the interline transfer CCD architecture of the sensor. However, these nonuniformities are not severe as indicated by how well the {\sc rn}-distribution fits the data. What this demonstrates is that per-pixel maps allow further insight into how sensor architecture affects the uniformity of key performance parameters across the sensor array.

\begin{figure}[htb]
\centering
\includegraphics[scale=1]{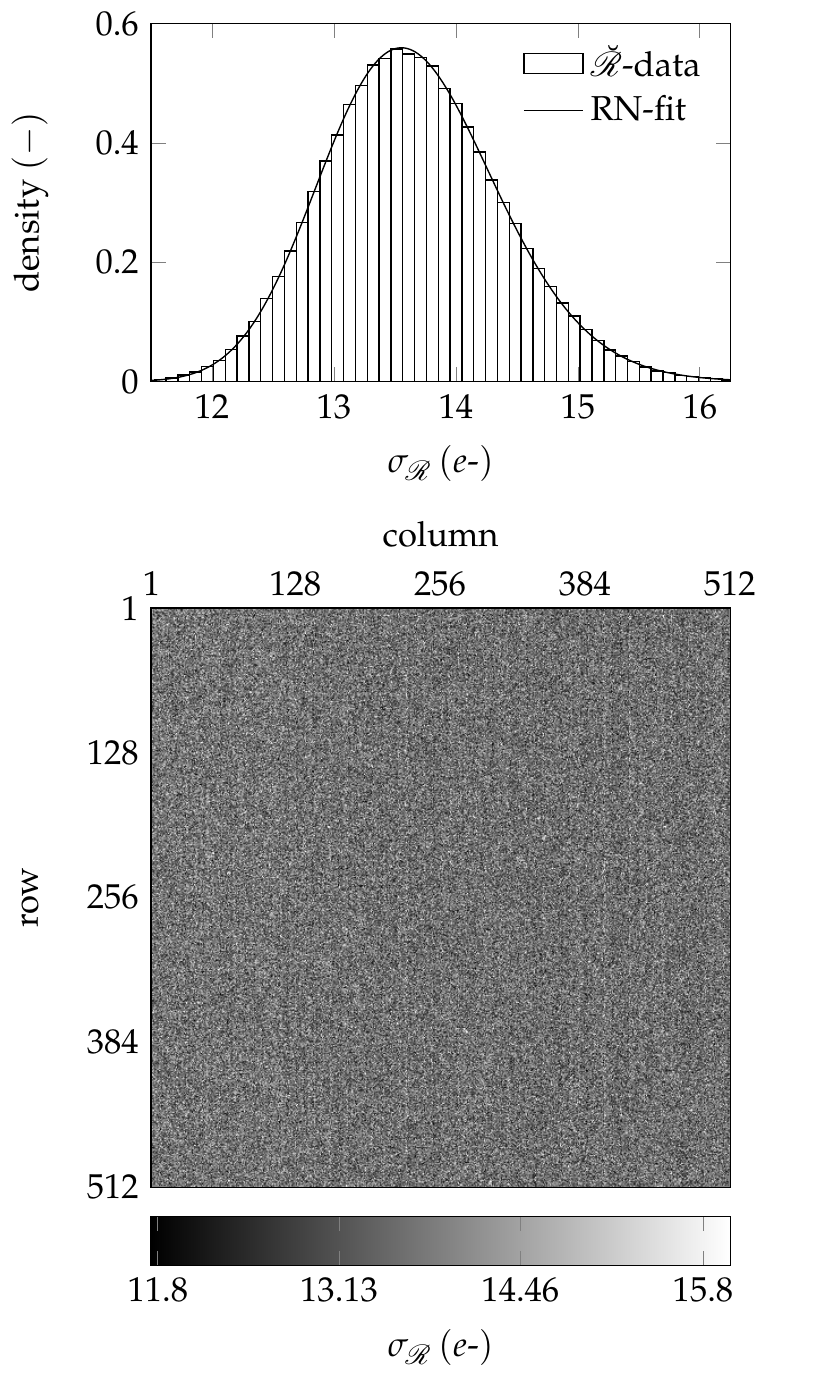}
\caption{Histogram of $\sigma_{\mathscr R}$-map values fit with the {\sc rn} distribution (top) and $\breve R$-map (bottom).}
\label{fig:R_map}
\end{figure}


\section{Conclusions}
\label{sec:conclusions}

In this work we have presented a general method for sample size determination of the photon transfer conversion gain measurement given a desired uncertainty requirement. So long as this uncertainty requirement is small and sensor dark noise is greater than $5\,e\text{-}$, this method of determining optimal sample sizes works across the full dynamic range of a sensor. Additionally, we have developed analytical expressions for the moments of the conversion gain sampling distribution (the {\sc cif} distribution) through the use of pseudomoments, showing that these pseudomoments accurately describe the sampling moments of conversion gain data under the proposed sensor noise model. With these theoretical results, we were able to construct simple design and control of experiment procedures that guides the number of samples required for both dark and illuminated conditions based on iterative statistics and predicted convergence. These experimental procedures were executed on a real image sensor; the results of which agreed with our theoretical predictions and were further confirmed through Monte Carlo simulation.

The ability to optimally measure per-pixel conversion gain is a key development in a more comprehensive approach to per-pixel photon transfer characterization.  We have already shown how per-pixel conversion gain maps enable per-pixel read noise estimation and we plan to extend this approach to measure other important {\sc pt} parameters such as well-capacity and dynamic range on a per-pixel basis. The additional information that comes from these per-pixel maps provides a richer characterization of the sensor and opens up the idea of assigning quality metrics to a sensor based on the uniformity of its per-pixel maps.


\section{Proofs}
\label{sec:proofs}

\begin{proof}[Proof of Lemma \ref{lem:ACV2_of_product}]
By the law of total variance
\begin{equation}
\mathsf{Var}T=(\mathsf EX^2)\mathsf{Var}Y+(\mathsf EY)^2\mathsf{Var}X.
\end{equation}
Substituting $\mathsf EX^2=\mathsf{Var}X+(\mathsf EX)^2$, expanding, and dividing both sides by $(\mathsf ET)^2=(\mathsf EX)^2(\mathsf EY)^2$ gives the desired result.
\end{proof}

\begin{proof}[Proof of Theorem \ref{thm:ACVG_shot_limited_approx}]
Letting $T=G$, $X=\bar P$, and $Y=\hat P^{-1}$ we have after dividing both sides of the relation in Lemma \ref{lem:ACV2_of_product} by $\mathsf{ACV}^2\hat P^{-1}$ and combining with
\begin{equation}
\mathsf{ACV}^2\bar P=\frac{1}{n}\frac{\sigma_P^2}{\mu_P^2}=\frac{1}{n}\frac{1}{\mu_Pg}=\frac{1}{n}\frac{1}{\mu_{e\text{-}}}
\end{equation}
and $\mathsf{ACV}^2\hat P^{-1}=2/(n-5)$:
\begin{equation}
\frac{\mathsf{ACV}G}{\mathsf{ACV}\hat P^{-1}}=\left(1+\frac{n-3}{2n}\frac{1}{\mu_{e\text{-}}}\right)^{1/2}.
\end{equation}
The result then follows from noting that $\sqrt{1+a/x}=1+a/(2x)+\mathcal O(x^{-2})$ as $x\to\infty$.
\end{proof}

\begin{proof}[Proof of Lemma \ref{lem:parameter_relations}]
The relation for $\sigma_P^2$ is derived by combining $\sigma_P^2=\sigma_{P+D}^2-\sigma_D^2$ and $\sigma_{P+D}^2=\sigma_D^2/\zeta$. To derive the relationship for $\mu_P$ we use the fundamental gain relation in (\ref{eq:modified_gain_relation}) to write $\mu_P=\sigma_P^2g$.
\end{proof}

\begin{proof}[Proof of Lemma \ref{lem:barP_optimal_samples}]
From (\ref{eq:ACV2_barP}) we observe $\mathsf{ACV}^2\bar P(N-n_2,n_2)$ is strictly convex on $n_2\in(0,N)$ for all $\zeta\in(0,1)$; thus, the optimal sample sizes for $\bar P$ satisfy
\begin{equation}
\begin{aligned}
\partial_{n_2} \mathsf{ACV}^2\bar P(N-n_2,n_2)\Big|_{N=n_1^{\mathrm{opt}}+n_2^{\mathrm{opt}},n_2=n_2^{\mathrm{opt}}} &=0,\\
\mathsf{ACV}^2\bar P(n_1^{\mathrm{opt}},n_2^{\mathrm{opt}}) &=\mathsf{acv}_0^2.
\end{aligned}
\end{equation}
The first equation gives the optimality relation $n_2^\mathrm{opt}/n_1^\mathrm{opt}=\sqrt\zeta$, which proves $n_2^\mathrm{opt}/n_1^\mathrm{opt}\to 1$ as $\zeta\to 1^-$. Solving the systems of equations then gives
\begin{equation}
(n_1^\mathrm{opt},n_2^\mathrm{opt})=\left(\frac{1}{\sigma_\mathscr D^2\mathsf{acv}_0^2}\frac{\zeta(1+\sqrt\zeta)}{(1-\zeta)^2},\frac{1}{\sigma_\mathscr D^2\mathsf{acv}_0^2}\frac{\zeta(\sqrt\zeta+\zeta)}{(1-\zeta)^2}\right).
\end{equation}
We now see that $n_2^\mathrm{opt}<n_1^\mathrm{opt}$, which allows us to deduce the stronger result $n_2^\mathrm{opt}/n_1^\mathrm{opt}\to 1^-$ as $\zeta\to 1^-$. Then working directly with the expressions for the optimal sample sizes it is straightforward to show $n_i^\mathrm{opt}\sim 2/(\sigma_\mathscr D^2\mathsf{acv}_0^2)(1-\zeta)^{-2}$, which completes the proof.
\end{proof}

\begin{proof}[Proof of Lemma \ref{lem:hatPinv_delta_optimal_samples}]
Upon inspection of (\ref{eq:ACV2_hatPinv_delta}) we see $\mathsf{ACV}^2\hat P^{-1}_\delta(N-n_2,n_2)$ is strictly convex on $n_2\in(1,N-1)$ for all $\zeta\in(0,1)$; thus, its optimal sample sizes satisfy
\begin{equation}
\begin{aligned}
\partial_{n_2} \mathsf{ACV}^2\hat P^{-1}_\delta(N-n_2,n_2)\Big|_{N=n_1^{\mathrm{opt}}+n_2^{\mathrm{opt}},\,n_2=n_2^{\mathrm{opt}}} &=0,\\
\mathsf{ACV}^2\hat P^{-1}_\delta(n_1^{\mathrm{opt}},n_2^{\mathrm{opt}}) &=\mathsf{acv}_0^2.
\end{aligned}
\end{equation}
Equating the derivative with zero we find for optimality relation $1/(n_1^\mathrm{opt}-1)=\zeta/(n_2^\mathrm{opt}-1)$, which upon substituting into the second equation gives us $n_1^\mathrm{opt}$. Substituting the solution for $n_1^\mathrm{opt}$ back into the optimality relation then gives $n_2^\mathrm{opt}$. The proof is now complete.
\end{proof}

\begin{remark}
\label{rem:tuned_optimal_samples}
Suppose we define $\hat P^{-1}$ in terms of the optimal sample sizes for $\hat P^{-1}_\delta$ and consider what happens as we pass to the shot-noise-limit. Denoting
\begin{equation}
\hat P^{-1}_\mathrm{opt}\coloneqq\frac{1}{\hat X(n_1^\mathrm{opt})-\hat Y(n_2^\mathrm{opt})},
\end{equation}
we see as $\zeta\to 0^+$ two simultaneous events occurring: (1) $n_2^\mathrm{opt}\to 1$, which implies $\hat Y(n_2^\mathrm{opt})\overset{d}{\to}\delta(0)$ and (2) $\sigma_{P+D}\to \sigma_P$ so that $\hat X\overset{d}{\to}\mathcal G(\alpha_1,\alpha_1/\sigma_P^2)$ with $\alpha_1=(n_1^\mathrm{opt}(0)-1)/2$. In conclusion,
\begin{equation}
\lim_{\zeta\to 0^+}\hat P^{-1}_\mathrm{opt}\overset{d}{=}\frac{1}{\hat X(n_1^\mathrm{opt}(0))}
\end{equation}
with the r.h.s. being the shot-noise-limited estimator of $1/\sigma_P^2$ in Section \ref{sec:shot_noise_limited} computed from $n_1^\mathrm{opt}(0)$ observations. Comparing the shot-noise-limited optimal sample sizes for this estimator in (\ref{eq:shot_noise_limited_optimal_samples}) with $n_1^\mathrm{opt}(0)=2/\mathsf{acv}_0+1$ we see that we can force our optimal sample sizes to be exact in the shot-noise-limit by instead using
\begin{equation}
\label{eq:opt_samples_for_hatP_inv_delta}
(n_1^\mathrm{opt},n_2^\mathrm{opt})=\left(\frac{2(1+\zeta)}{\mathsf{acv}_0^2(1-\zeta)^2}+5,\frac{2\zeta(1+\zeta)}{\mathsf{acv}_0^2(1-\zeta)^2}+1\right).
\end{equation}
\end{remark}

\begin{lemma}
\label{lem:Dawson_asymptotic_expansion}
As $|z|\to\infty$
\begin{equation}
\sqrt 2\,\mathcal D(z/\sqrt 2)\sim\frac{1}{z}\sum_{k=0}^\infty(2k-1)!!\frac{1}{z^{2k}}.
\end{equation}
\end{lemma}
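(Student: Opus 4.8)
The plan is to reduce the claim to the classical Poincaré asymptotic expansion of the Dawson integral at large argument and then to prove that expansion directly. Putting $x=z/\sqrt2$, the assertion is equivalent to
\begin{equation}
\mathcal D(x)\sim\frac{1}{2x}\sum_{k=0}^\infty\frac{(2k-1)!!}{(2x^2)^k},\qquad x\to\infty,
\end{equation}
because multiplying by $\sqrt2$ and using $\sqrt2\,x=z$ and $2x^2=z^2$ reproduces the stated series term by term, the $k=0$ term $1/z$ arising from the convention $(-1)!!=1$. Hence it suffices to establish this one expansion for $\mathcal D$.

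To pin down the coefficients quickly I would use the first-order differential equation $\mathcal D'(x)=1-2x\,\mathcal D(x)$, which follows at once by differentiating $\mathcal D(x)=e^{-x^2}\int_0^x e^{t^2}\,\mathrm dt$. Substituting a formal series $\mathcal D(x)\sim\sum_{n\ge0}a_n x^{-(2n+1)}$ and matching powers of $x$ gives $a_0=1/2$ and the recursion $a_{n+1}=\tfrac{2n+1}{2}a_n$, whose solution $a_n=(2n-1)!!/2^{\,n+1}$ is exactly the series above. This fixes the coefficients but presupposes that an expansion of this shape exists, so for rigor I would instead integrate by parts. Discarding $\int_0^1 e^{t^2}\,\mathrm dt$ (its product with $e^{-x^2}$ is exponentially small and invisible to the power series), I write $e^{t^2}=\frac{\mathrm d}{\mathrm dt}\bigl(e^{t^2}/2t\bigr)+e^{t^2}/2t^2$ and integrate $\int_1^x e^{t^2}\,\mathrm dt$ by parts repeatedly. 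Iterating the analogous identity $\frac{\mathrm d}{\mathrm dt}\bigl(e^{t^2}/2t^{2k+1}\bigr)=e^{t^2}/t^{2k}-\tfrac{2k+1}{2}\,e^{t^2}/t^{2k+2}$, the $k$-th step peels off the boundary term $\tfrac{(2k-1)!!}{2^k}\cdot\tfrac{e^{x^2}}{2x^{2k+1}}$ (plus a constant, again negligible after multiplying by $e^{-x^2}$) and leaves a remainder $c_N\int_1^x e^{t^2}t^{-2N}\,\mathrm dt$ with $c_N=(2N-1)!!/2^N$; multiplying through by $e^{-x^2}$ recovers the partial sums of the target series.

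The crux of the lemma, and the step I expect to be the main obstacle, is then the remainder estimate
\begin{equation}
e^{-x^2}\int_1^x\frac{e^{t^2}}{t^{2N}}\,\mathrm dt=\mathcal O\!\left(x^{-(2N+1)}\right),
\end{equation}
which certifies the truncation is asymptotic in Poincaré's sense. A crude bound by the maximum of the integrand times the interval length is hopeless, since $e^{t^2}/t^{2N}$ grows super-exponentially and concentrates at the endpoint $t=x$. I would control it by the inequality $e^{t^2-x^2}\le e^{-x(x-t)}$ (valid because $x+t\ge x$) followed by the substitution $s=x-t$, which turns the left side into a Laplace-type integral $\int_0^{x-1}e^{-xs}(x-s)^{-2N}\,\mathrm ds$. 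Splitting at $s=x/2$, the small-$s$ part is bounded by $(x/2)^{-2N}\int_0^\infty e^{-xs}\,\mathrm ds=2^{2N}x^{-(2N+1)}$ and the large-$s$ part by $x\,e^{-x^2/2}$, which is exponentially small; together these give the claimed $\mathcal O(x^{-(2N+1)})$. Since each extracted term is smaller than its predecessor by a factor $\mathcal O(x^{-2})$, these estimates assemble into the full asymptotic expansion of $\mathcal D$, and undoing the substitution $x=z/\sqrt2$ completes the proof.
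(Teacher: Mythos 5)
Your proof is correct, and it takes a genuinely different route from the paper. The paper's proof is essentially a two-line citation: it writes $\mathcal D(z)=\tfrac{\sqrt\pi}{2}e^{-z^2}\operatorname{erfi}(z)$, quotes the standard asymptotic expansion of $\operatorname{erfi}$, and converts Pochhammer symbols to double factorials via $(2k-1)!!=2^k(1/2)_k$. You instead prove the expansion from scratch: the ODE $\mathcal D'(x)=1-2x\,\mathcal D(x)$ to identify the coefficients (correctly flagged as only formal), then repeated integration by parts on $\int_1^x e^{t^2}\,\mathrm dt$ with the exact iteration identity, and crucially an honest remainder bound $e^{-x^2}\int_1^x e^{t^2}t^{-2N}\,\mathrm dt=\mathcal O(x^{-(2N+1)})$ via $e^{t^2-x^2}\le e^{-x(x-t)}$ and a split Laplace-type estimate — all of which I have checked and which hold (including the convention $(-1)!!=1$ for the $k=0$ term and the correct handling of the exponentially small boundary contributions at $t=1$ and from $\int_0^1$). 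What each approach buys: the paper's proof is short and, via the $\operatorname{sgn}(\Im z)\,i$ term in the quoted $\operatorname{erfi}$ expansion, nominally covers complex $z$, but it rests entirely on an external special-function result; yours is self-contained, elementary, and yields explicit truncation error bounds in Poincaré's sense, at the cost of treating only real $x\to+\infty$ (which extends to $x\to-\infty$ by oddness of $\mathcal D$ and of the series, and which is all the paper actually uses, since the arguments $z_{\hat P}$ and $\mathsf{acv}_0^{-1}/\sqrt 2$ in Theorems \ref{thm:hatPinv_opt_ARB_ACV} and \ref{thm:G_opt_ARB_ACV} are real).
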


\begin{proof}[Proof of Lemma \ref{lem:Dawson_asymptotic_expansion}]
The proof follows from combining the relation
\begin{equation}
\mathcal D(z)=\frac{\sqrt\pi}{2}e^{-z^2}\operatorname{erfi}(z)
\end{equation}
with the asymptotic expansion for $|z|\to\infty$
\begin{equation}
\operatorname{erfi}(z)\sim\operatorname{sgn}(\Im z)i+\frac{1}{\sqrt\pi\,z}e^{z^2}\sum_{k=0}^\infty\frac{(1/2)_k}{z^{2k}},
\end{equation}
and the relation between the Pochhammer symbol and double factorial $(2k-1)!!=2^k(1/2)_k$.
\end{proof}

\begin{proof}[Proof of Theorem \ref{thm:hatPinv_opt_ARB_ACV}]
As $\mathsf{acv}_0\to 0^+$, $n_1^\mathrm{opt}\to\infty$ and $n_2^\mathrm{opt}\to\infty$ so that we have by the central limit theorem $\hat P_\mathrm{opt}\overset{d}{\to}\mathcal N(\mu_{\hat P_\mathrm{opt}},\sigma_{\hat P_\mathrm{opt}}^2)$. Noting that $\mathsf{ACV}\hat P_\mathrm{opt}=\mathsf{ACV}\hat P^{-1}_{\delta,\mathrm{opt}}=\mathsf{acv}_0$ we have after combing (\ref{eq:pseudo_arb}) with (\ref{eq:negative_mgf_normal})
\begin{equation}
\mathsf{ARB}_\mathcal P\hat P^{-1}_\mathrm{opt}\to\left|\frac{\sqrt 2}{\mathsf{acv}_0}\mathcal D\left(\frac{1}{\sqrt 2\,\mathsf{acv}_0}\right)-1\right|.
\end{equation}
Since $\mathsf{acv}_0$ is small, we subsequently have according to Lemma \ref{lem:Dawson_asymptotic_expansion}
\begin{equation}
\mathsf{ARB}_\mathcal P\hat P^{-1}_\mathrm{opt}\sim\mathsf{acv}_0^2\sum_{k=0}^\infty(2k+1)!!\,\mathsf{acv}_0^{2k},
\end{equation}
which leads to the desired result for $\mathsf{ARB}_\mathcal P\hat P^{-1}_\mathrm{opt}$.

By the same line of reasoning we combine (\ref{eq:pseudo_acv2}) and (\ref{eq:negative_mgf_normal})
\begin{equation}
\mathsf{ACV}^2_\mathcal P\hat P^{-1}_\mathrm{opt}\to\frac{\mathsf{ARB}_\mathcal P\hat P^{-1}_\mathrm{opt}}{(\sqrt 2\,\mathcal D(\mathsf{acv}_0^{-1}/\sqrt 2))^2}-1.
\end{equation}
Using standard results for the products and multiplicative inverses of power series we write with the help of Lemma \ref{lem:Dawson_asymptotic_expansion}
\begin{equation}
(\sqrt 2\,\mathcal D(\mathsf{acv}_0^{-1}/\sqrt 2))^{-2}\sim\frac{1}{\mathsf{acv}_0^2}\sum_{k=0}^\infty b_k\,\mathsf{acv}_0^{2k},
\end{equation}
with $b_0=1$, $b_k=-\sum_{i=1}^ka_ib_{k-i}$ and
\begin{equation}
a_i=\sum_{j=0}^i(2(i-j)-1)!!\,(2j-1)!!.
\end{equation}
It follows
\begin{equation}
\mathsf{ACV}^2_\mathcal P\hat P^{-1}_\mathrm{opt}\sim\sum_{k=1}^\infty\left(\sum_{\ell=0}^k(2(k-\ell)+1)!!\, b_\ell\right)\mathsf{acv}_0^{2k},
\end{equation}
which upon combining with $\sqrt{1+ax^2+\mathcal O(x^4)}=1+ax^2/2+\mathcal O(x^4)$ as $x\to 0$ yields the desired asymptotic result for $\mathsf{ACV}_\mathcal P\hat P^{-1}_\mathrm{opt}$.
\end{proof}

\begin{proof}[Proof of Theorem \ref{thm:ACVG_general_approx}]
The proof follows much in the same way as that for Theorem \ref{thm:ACVG_shot_limited_approx}. We write
\begin{equation}
\frac{\mathsf{ACV}G_\delta}{\mathsf{ACV}\hat P^{-1}_\delta}=\left(1+\mathsf{ACV}^2\bar P+(\mathsf{ACV}^2\bar P)(\mathsf{ACV}^{-2}\hat P^{-1}_\delta)\right)^{1/2}.
\end{equation}
Combining this with the asymptotic approximations
\begin{equation}
\mathsf{ACV}^2\bar P=\frac{1}{n_1\sigma_\mathscr D^2}\zeta+\mathcal O(\zeta^2),
\end{equation}
\begin{equation}
\mathsf{ACV}^{-2}\hat P^{-1}_\delta=\frac{n_1-1}{2}+(n_1-1)\zeta+\mathcal O(\zeta^2),
\end{equation}
and $\sqrt{1+ax+\mathcal O(x^2)}=1+ax/2+\mathcal O(x^2)$ as $x\to 0$ gives the desired result.
\end{proof}

\begin{proof}[Proof of Theorem \ref{thm:G_delta_asymptotic_optimal_samples}]
The results of Lemma \ref{lem:barP_optimal_samples} and Proposition \ref{prop:optimal_sample_properties} show that as $\zeta\to 1^-$, the optimal sample sizes for $\hat P^{-1}_\delta$ and $\bar P$ are asymptotically equal and of the form $n_i^\mathrm{opt}\sim C(1-\zeta)^{-2}$. As such, the optimal sample sizes for $G_\delta$ must also be asymptotically equal and of the form $n_i^\mathrm{opt}\sim C_{G_\delta}(1-\zeta)^{-2}$ since
\begin{equation}
\mathsf{ACV}^2G_\delta=\mathsf{ACV}^2\bar P+(\mathsf{ACV}^2\bar P)(\mathsf{ACV}^2\hat P^{-1}_\delta)+\mathsf{ACV}^2\hat P^{-1}_\delta.
\end{equation}
To determine $C_{G_\delta}$ we substitute $n_1=n_2=C_{G_\delta}(1-\zeta)^{-2}$ into the expression for $\mathsf{ACV}^2G_\delta$, pass to the limit, and equate with $\mathsf{acv}_0$ yielding
\begin{equation}
\frac{2}{\sigma_\mathscr D^2}\frac{1}{C_{G_\delta}}+\frac{8}{\sigma_\mathscr D^2}\frac{1}{C_{G_\delta}^2}+4\frac{1}{C_{G_\delta}}=\mathsf{acv}_0.
\end{equation}
The resulting expression is a quadratic equation in $C_{G_\delta}$, the roots of which are real and differing in sign. Taking the positive root then yields the expression for $C_{G_\delta}$.
\end{proof}

\begin{proof}[Proof of Theorem \ref{thm:G_opt_ARB_ACV}]
Since $\mathsf{ARB}_\mathcal PG=\mathsf{ARB}_\mathcal P\hat P^{-1}$, the result for $\mathsf{ARB}_\mathcal PG_\mathrm{opt}$ immediately follows from Theorem \ref{thm:hatPinv_opt_ARB_ACV}. To obtain the result for $\mathsf{ACV}_\mathcal PG_\mathrm{opt}$, combine
\begin{equation}
\mathsf{ACV}_\mathcal P^2G=\mathsf{ACV}_\mathcal P^2\hat P^{-1}+(\mathsf{ACV}_\mathcal P^2\hat P^{-1})(\mathsf{ACV}^2\bar P)+\mathsf{ACV}^2\bar P
\end{equation}
with
\begin{equation}
\begin{aligned}
\mathsf{ACV}^2_\mathcal P\hat P^{-1}_\mathrm{opt} &=\mathsf{acv}_0+6\mathsf{acv}_0^3+\mathcal O(\mathsf{acv}_0^5)\\
\mathsf{ACV}^2_\mathcal P\bar P_\mathrm{opt} &=a\mathsf{acv}_0+b\mathsf{acv}_0^3+\mathcal O(\mathsf{acv}_0^5)
\end{aligned}
\end{equation}
and $\sqrt{1+ax^2+\mathcal O(x^4)}=1+ax^2/2+\mathcal O(x^4)$ as $x\to 0$.
\end{proof}


\section{Backmatter}

\begin{backmatter}

\bmsection{Funding} The authors report no funding for this work.

\bmsection{Acknowledgments} The authors would like thank Nico Schl\"{o}mer for his {\ttfamily{matlab2tikz}} function used to create the figures throughout this work \cite{schlomer_2021}. The authors would also like to thank Paul Enta for pointing us to reference \cite{Schmidt_2017} on square series generating functions.

\bmsection{Disclosures} The authors declare no conflicts of interest.

\bmsection{Data availability} Data underlying the results presented in this paper are not publicly available at this time but may be obtained from the authors upon reasonable request.

\bmsection{Supplemental document}
Code for the {\sc doe} and {\sc coe} procedures is available on figshare (Ref.~\cite{supp_code1,supp_code2}).

\end{backmatter}

\bibliography{sources}

\end{document}